\pgfplotsset{compat=newest,
}
\begin{document}
%
\title{Cross-Model Conjunctive Queries over Relation and Tree-structured Data (Extended)}
%

\author{Yuxing Chen \and
Jiaheng Lu~{\scriptsize}}
%
\institute{University of Helsinki \\
\email{first.last@helsinki.fi}}

\maketitle              
\begin{abstract}
Conjunctive queries are the most basic and central class of database queries. With the continued growth of demands to manage and process the massive volume of different types of data, there is little research to study the conjunctive queries between relation and tree data.
In this paper, we study of Cross-Model Conjunctive Queries (CMCQs) over relation and tree-structured data (XML and JSON). 
To efficiently process CMCQs with bounded intermediate results, we first encode tree nodes with position information. With tree node original label values and encoded position values, it allows our proposed
algorithm \textit{CMJoin} to join relations and tree data simultaneously, avoiding massive intermediate results. \textit{CMJoin} achieves worst-case optimality in terms of the total result of label values and encoded position values. 
Experimental results demonstrate the efficiency and scalability of the proposed techniques to answer a CMCQ in terms of running time and intermediate result size. 

\keywords{Cross-model join \and Worst-case optimal \and Relation and tree data.}
\end{abstract}
\thispagestyle{empty}

\section{Introduction}\label{sec:introduction}

Conjunctive queries are the most fundamental and widely used database queries   \cite{DBLP:conf/focs/AtseriasGM08}. 
They correspond to \texttt{project}-\texttt{select}-\texttt{join}  queries in the relational algebra. They also correspond to non-recursive datalog rules \cite{DBLP:conf/pods/ChaudhuriV92}
\begin{equation}\label{equ:conjunctive_join}
R_0(u_0) \leftarrow R_1(u_1) \wedge R_2(u_2) \wedge \ldots \wedge R_n(u_n),
\end{equation}
where $R_i$ is a relation name of the underlying database, $R_0$ is the output relation, and each argument $u_i$ is a list of $|u_i|$ variables, where $|u_i|$ is the arity of the corresponding relation. The same variable can occur
multiple times in one or more argument lists.

It turns out that traditional database engines are not optimal to answer  conjunctive queries, as all pair-join engines may produce unnecessary intermediate results on many join queries~\cite{DBLP:journals/corr/abs-1203-1952}. For example, consider a typical triangle conjunctive query $R_0(a,b,c) \leftarrow R_1(a,b) \wedge R_2(b,c) \wedge R_3(a,c)$, where the size of input relations $|R_1|$= $|R_2|$ =$|R_3|$ = $N$.  The worst-case size bound of the output table $|R_0|$ yields $\mathcal{O}(N^{\frac{3}{2}})$. But any pairwise relational algebra plan takes at least $\Omega(N^{2})$, which is asymptotically worse than the optimal engines. To solve this problem, recent algorithms (e.g. \textit{NPRR} \cite{DBLP:journals/corr/abs-1203-1952},  \textit{LeapFrog}  \cite{veldhuizen2012leapfrog}, \textit{Joen}  \cite{ciucanu2015worst}) were discovered to achieve the optimal asymptotic bound for conjunctive queries. 

Conjunctive queries over trees have recently attracted attention \cite{DBLP:journals/jacm/GottlobKS06}, as trees are a clean abstraction of HTML, XML, JSON, and LDAP. The tree structures in conjunctive queries are represented using node label relations and axis relations such as \textit{Child} and \textit{Descendant}. For example, the \textit{XPath} query $A[B]//C$ is equivalent to the conjunctive query:  

\begin{equation}
\begin{split}
R(z) &\leftarrow Label(x,``A") \wedge Child(x,y) \wedge Label(y,``B")  \\
           & ~~~~ \wedge Descendant(x,z) \wedge Label(z,``C").
\end{split}
\end{equation}

Conjunctive queries with trees have been studied extensively. For example, see \cite{DBLP:journals/jacm/GottlobKS06} on their complexity, \cite{BENEDIKT20053} on their expressive power, and \cite{Bjorklund:2007:CQC:1783534.1783542,DBLP:journals/jacm/GottlobKS06} on the satisfiability problem. 
While conjunctive queries with relations or trees have been studied separately in the literature, to the best of our knowledge, there is no existing work to combine them together to study a hybrid conjunctive query. This paper fills this gap to embark on the study of  a Cross-Model Conjunctive Query (CMCQ) over both relations and trees. This  problem emerges in modern data management and analysis, which often demands a hybrid evaluation with data organized in different formats and models, as illustrated in the following. 

\begin{figure}[htp!]
     \centering
    \small
     \includegraphics[width=0.8\linewidth]{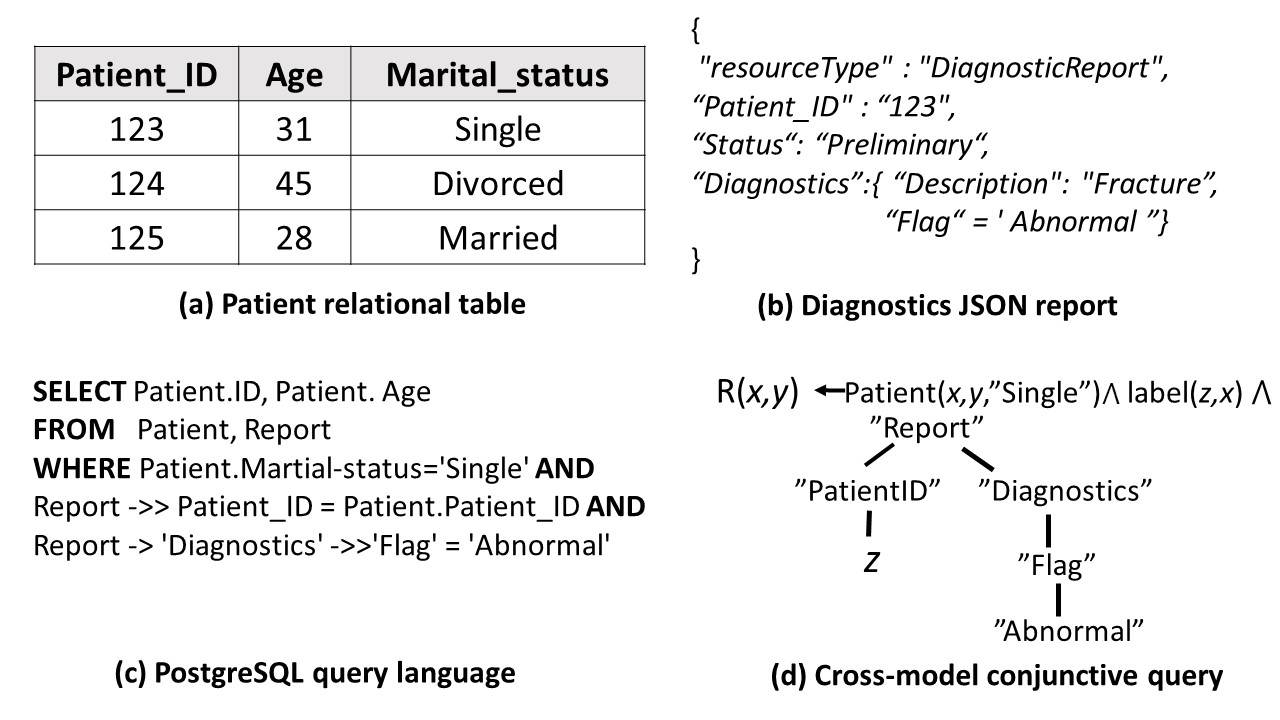}
     \caption{Illustration for the motivation and example of cross-model conjunctive query.}
     \label{fig:example_motivation_cross_model_query}
\end{figure}


\begin{example}\label{example:intro_cross_model_example} Suppose we want to perform  data analysis  synergistically for heterogeneous medical records where the patient information is stored in a relational table and the diagnostic reports are formatted with JSON documents (See \autoref{fig:example_motivation_cross_model_query}). Assume that one query is to
find the patient who is single and has an  ``\textsf{abnormal}'' flag in the diagnostic report.   \autoref{fig:example_motivation_cross_model_query}(c) illustrates the query language based on  PostgreSQL database to perform the cross-model join between the relational table and the JSON document.  This query can be naturally represented with a cross-model conjunctive query in \autoref{fig:example_motivation_cross_model_query}(d). 
\end{example}

\begin{figure}\centering 
\begin{subfigure}[t]{0.37\textwidth}
        \centering
\begin{tikzpicture}[font=\tiny,
	level distance=0.45cm,
  level 1/.style={sibling distance=.32cm},
  level 2/.style={sibling distance=1cm},
]
\useasboundingbox [fill=gray!0]  (0,0) rectangle (30.5mm,16.5mm); 

\node[black] at (4.5mm,12mm)  {$R_0$(x,y,z)};

\draw  [<-, thick] (9.5mm,12mm) -- (12.5mm,12mm);

\node[black] at (16mm,8mm)  { $\iff$ };

\node[black] at (16mm,15mm)  {x}
    child {node {y}}
    child {node[black] {z}[style = {double,red}]};

\node[black] at (19.5mm,12.5mm)  {\large $\wedge$ \normalsize};

\node[black] at (25.5mm,12mm)  {$R_1$(x,y,z)};

\node[black] at (18mm,5mm)  {Child(x,y)};
\node[black] at (25mm,5.5mm)  {\large $\wedge$ \normalsize};
\node[black] at (10mm,2mm)  {Descendant(x,z)};
\node[black] at (4.5mm,5mm)  {$R_0$(x,y,z)};
\draw  [<-, thick] (9.5mm,5mm) -- (12.5mm,5mm);
\node[black] at (19.5mm,2.5mm)  {\large $\wedge$ \normalsize};
\node[black] at (25.5mm,2mm)  {$R_1$(x,y,z)};

\end{tikzpicture}

        \vspace*{-1.5mm}
        \caption{A CMCQ}
        \label{fig:pre_cross_model_query}
\end{subfigure}%
\begin{subfigure}[t]{0.33\textwidth}
        \centering
\begin{tikzpicture}[font=\tiny,
	level distance=0.5cm,
  level 1/.style={sibling distance=0.5cm},
  level 2/.style={sibling distance=0.5cm},
]
\node[black] at (4mm,12.5mm)  {$x_1$}
    child {node {$y_1$}}
    child {node {$z_1$}
        child {node {$y_2$}}
        child {node {$z_2$}}};

     \node at (18mm,7.5mm) {\setlength{\tabcolsep}{3pt}\begin{tabular}{ c | c | c}
     \toprule
     $x_1$ & $y_1$ & $z_1$ \\ \hline
     $x_1$ & $y_2$ & $z_1$ \\ \hline
     $x_2$ & $y_1$ & $z_1$ \\ \bottomrule
\end{tabular}};

\end{tikzpicture}
    
        \vspace*{-1.5mm}
        \caption{Instances}
        \label{fig:pre_cross_model_instance}
\end{subfigure}%
\begin{subfigure}[t]{0.25\textwidth}
        \centering
\begin{tikzpicture}[font=\tiny,
	level distance=0.5cm,
  level 1/.style={sibling distance=0.5cm},
  level 2/.style={sibling distance=0.5cm},
]
  \useasboundingbox [fill=gray!0] (0,0) rectangle (17mm,16mm);
  \node[black] at (8.5mm,10.5mm)  {$R_0$(x,y,z)};
  
  \node at (8.5mm,6mm) {\setlength{\tabcolsep}{3pt}\begin{tabular}{ c | c | c}
     \toprule
     $x_1$ & $y_1$ & $z_1$ \\ \bottomrule
\end{tabular}};
\end{tikzpicture}

        \vspace*{-1.5mm}
        \caption{Result}
        \label{fig:pre_cross_model_result}
\end{subfigure}%








This paper embarks on the study of  a Cross-Model Conjunctive Query (CMC Q) over both relations and trees. \autoref{fig:cross_model_query} depicts a CMCQ. CMCQs emerge in modern data management and analysis, which often demands a hybrid evaluation with data organized in different formats and models, e.g. data lake~\cite{DBLP:conf/sigmod/HaiGQ16}, multi-model databases \cite{Lu-2019-MDN-3341324-3323214}, polystores~\cite{DBLP:journals/sigmod/DugganESBHKMMMZ15}, and computational linguistics \cite{xue2005penn_treebank}.
The detailed application scenarios of CMCQs can be exemplified  as follows:

\smallskip
\noindent \textbf{Data integration in data lake} \indent Data which resides in the data lake \cite{DBLP:conf/sigmod/HaiGQ16} may include highly structured data stored in SQL databases or data warehouses, and nesting or multiple values data in Parquet or JSON documents. Cross-model conjunctive queries on trees and relations can be used to integrate structured data from relational databases and semi-structured  data in open data formats (e.g., through the PartiQL query language in Amazon Data lake \cite{journals/debu/CaiGGNPPP18}).

 \smallskip
\noindent \textbf{Cross-model query processing} \indent In the scenarios of multi-model databases \cite{Lu-2019-MDN-3341324-3323214} and polystores~\cite{DBLP:journals/sigmod/DugganESBHKMMMZ15},  query evaluation often involves data formatted with different models. The join problems between the structured and semi-structured data can boil down to evaluating a CMCQ 
over relations and trees. 

 \smallskip
\noindent \textbf{Queries in computational linguistics} \indent A further area in which CMCQs are employed is computational linguistics, where one needs to search in, or check properties of large corpora of parsed natural language. Corpora such as Penn Treebank \cite{xue2005penn_treebank} are unranked trees labeled with the phrase structure of parsed texts. A conjunctive query on trees and relations can find sentences to satisfy specific semantic structure by  relations (e.g. hyponym relations \cite{journals/tkde/WeiLMZZF14}) and corpora trees \cite{zhou-zhao-2019-head}.

The number of applications that we have hinted at above motivates the study of CMCQs, and the main contributions of this paper are as follows:

\begin{enumerate}[leftmargin=*]

\item This paper embarks on the study of the cross-model conjunctive query (CMC Q) and formally defines the problem of CMCQ processing, which integrates both relational conjunctive query and tree conjunctive pattern together.

\item We propose \textit{CMJoin}-algorithm to process relations and encoded tree data efficiently. \textit{CMJoin} produces worst-case optimal join result in terms of the label values as well as the encoded information values. In some cases, \textit{CMJoin} is worst-case optimal join in the absence of encoded information.

\item Experiments on real-life and benchmark datasets show the effectiveness and efficiency of the algorithm in terms of running time and intermediate result size. 

\end{enumerate}

The remainder of the paper is organized as follows.  In \autoref{sec:preliminary} we provide preliminaries of approaches.  We then extend the worst-case optimal algorithm for CMCQs in \autoref{sec:approach}. We evaluate our approaches empirically in \autoref{sec:evaluation}.  We review related works in \autoref{sec:related_work}. \autoref{sec:conclusion} concludes the paper. Note that this is an extended version of previous work \cite{DBLP:conf/sigmod/Chen18,chen2021performance,DBLP:conf/dasfaa/ChenULLD22}

\section{Preliminary}\label{sec:problem_statement}
\label{sec:preliminary}


\noindent \textbf{Cross-model conjunctive query} \indent Let $\mathbb{R}$ be a database schema and $R_1,\ldots,$ $R_n$ be relation names in $\mathbb{R}$. A rule-based conjunctive query over $\mathbb{R}$ is an expression of the form  $R_0(u_0) \leftarrow R_1(u_1) \wedge R_2(u_2) \wedge \ldots \wedge R_n(u_n)$, where $n \geq 0$, $R_0$ is a relation not in $\mathbb{R}$. Let $u_0, u_1, \ldots, u_n$ be free tuples, i.e. they may be either variables or constants. Each variable occurring in $u_0$ must also occur at least once in $u_1, \ldots, u_n$. 

Let $T$ be a tree pattern with two binary axis relations: \texttt{Child}  and  \texttt{Descendant}. The axis relations \texttt{Child}  and  \texttt{Descendant} are defined in the normal way \cite{DBLP:journals/jacm/GottlobKS06}. In general,  a cross-model conjunctive query contains three components: (i) the relational expression $\tau_1 := \exists r_1,\ldots, r_k \colon R_1(u_1) \wedge R_2(u_2) \wedge \ldots \wedge R_n(u_n)$, where $r_1, \ldots, r_k$ are all the variables in the relations $R_1,\ldots, R_n$;
(ii) the tree  expression $\tau_2 := \exists t_1, \ldots, t_k \colon Child(v_1)  \wedge \ldots \wedge Descendant(v_n)$,
where $t_1,\ldots,t_k$ are all the node variables occurring in $v_i$, for $i \geq 1$ and each $v_i$ is a binary tuple $(t_{i_1},t_{i_2})$; and (iii) the cross-model label expression $\tau_3 := \exists  r_1, \ldots, r_k, t_1, \ldots ,t_k \colon label_1(t_{i_1},r_{j_1}) \wedge \ldots \wedge label_n(t_{i_n},r_{j_n})$, where $\Sigma$ denotes a labeling alphabet. Given any node $t \in T$,  $label(t,s)$ means that the label of the node $t$ is $s \in \Sigma$. The label relations bridge the expressions of relations and trees by the equivalence between the label values of the tree nodes and the values of relations.

By combining the three components together, we define a cross-model conjunctive query with the calculus of form $\left\{e_{1}, \ldots, e_{m} \mid \tau_1  \wedge  \tau_2 \wedge \tau_3 \right\} \label{eq:cmcq}$, where the variables $e_1,\ldots,e_m$ are the return elements which occur at least once in relations.
\begin{figure}[t]
     \centering
     \small
     \begin{subfigure}[t]{0.80\textwidth}
        \centering
        
\begin{tikzpicture}[font=\small,
	level distance=0.6cm,
  level 1/.style={sibling distance=0.5cm},
  level 2/.style={sibling distance=0.5cm},
]
\useasboundingbox [fill=gray!0] (0,0) rectangle (80mm,9mm); 
\node[black] at (2mm,5mm)  {$R_0$($r_a$,$r_b$,$r_c$)};

\draw  [<-, thick] (15mm,5mm) -- (20mm,5mm);

\node[black] at (60mm,9mm)  {$R_1$($r_a$,$r_b$) $~\bowtie~$ $R_2$($r_a$,$r_c$)  $~\bowtie~$ };
\node[black] at (60mm,5mm)  { Descendant($t_a$, $t_b$) $~\bowtie~$  Descendant($t_a$, $t_c$)  $~\bowtie~$};
\node[black] at (60mm,1mm)  { Label($t_a$,$r_a$) $~\bowtie~$ Label($t_b$,$r_b$) $~\bowtie~$ Label($t_c$,$r_c$) };

\end{tikzpicture}
        \caption{Conjunctive query form}
        \label{fig:conjunctive_query_form}
\end{subfigure}%

\begin{subfigure}[t]{0.8\textwidth}
        \centering
        
\begin{tikzpicture}[font=\small,
	level distance=0.6cm,
  level 1/.style={sibling distance=0.5cm},
  level 2/.style={sibling distance=0.5cm},
]
\useasboundingbox [fill=gray!0] (0,0) rectangle (80mm,8mm); 

\node[black] at (2mm,2mm)  {$R_0$($a$,$b$,$c$)};

\draw  [<-, thick] (15mm,2mm) -- (20mm,2mm);

\node[black] at (60mm,2mm)  {$~\bowtie~$ $R_1$($a$,$b$) $~\bowtie~$ $R_2$($a$,$c$)};
\node[black] at (35mm,6mm)  {a}
    child {node[black] {b}[style = {double,red}]}
    child {node[black] {c}[style = {double,red}]};

\end{tikzpicture}
        \caption{Simplified expression}
        \label{fig:simplify_query}
\end{subfigure}%

        


        


     \caption{Complete and simplified expressions of CMCQ}
     \label{fig:simplified}
 \end{figure}
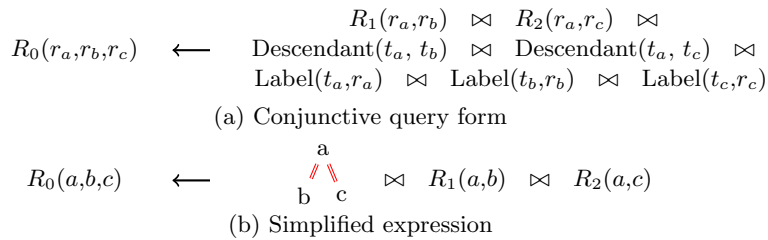
 \autoref{fig:conjunctive_query_form} shows an example of a cross-model conjunctive query, which includes two relations and one tree pattern. For the purpose of expression simplicity, we do not explicitly distinguish between the variable of trees (e.g. $t_a$) and that of relations (e.g. $r_a$), but simply write them with one symbol (i.e. $a$) if $label(t_a, r_a)$ holds.  We omit the label relation when it is clear from the context. 
\autoref{fig:simplify_query} shows a simplified representation of a query. 
 

\noindent \textbf{Revisiting relational size bound} \indent
We review  the size bound for the relational model, which Asterias, Grohe, and Marx (AGM) \cite{DBLP:conf/focs/AtseriasGM08} developed. 
The AGM bound is computed with linear programming (LP). Formally, given a relational schema $\mathbb{R}$, for every table $R \in \mathbb{R}$ let $A_{R}$ be the set of attributes of $R$ and $\mathcal{A} = \cup_{R} A_{R}$. Then the worst-case size bound is precisely the optimal solution for the following LP:

\begin{equation}\label{eq:lp_slution}
\begin{aligned}
 \quad  & \underset{}{\text{maximize}}
& &  \Sigma^\mathcal{A}_r x_r  \\
& \text{subject to}
& &  \Sigma^{A_{R}}_r x_r \leq 1  & & \text{ for all } R \in \mathbb{R},\\
&   
& &  0 \leq x_r \leq 1 & & \text{ for all } r\in \mathcal{A}.\\  
\end{aligned}
\end{equation}

Let $\rho$ denote the optimal solution of the above LP. Then the size bound of the query is $N^\rho$, where $N$ denotes the maximal size of each table. The AGM bound can be proved as a special case of the discrete version of the well-known Loomis-Whitney inequality \cite{loomis1949inequality} in geometry.  Interested readers may refer to the details of the proof in \cite{DBLP:conf/focs/AtseriasGM08}. We present these results informally and refer the readers to Ngo et al. \cite{DBLP:journals/sigmod/NgoRR13} for a complete survey.

For example, we consider a typical triangle conjunctive query $R_0(a,b,c) \leftarrow R_1(a,b) \wedge R_2(b,c) \wedge R_3(a,c)$ that we introduced in \autoref{sec:introduction}. Then the three LP inequalities corresponding to three relations include $x_a + x_b \leq 1$, $x_b + x_c \leq 1$, and $x_a + x_c \leq 1$. Therefore, the maximal value of $x_a + x_b + x_c$ is $3/2$, meaning that the size bound is $\mathcal{O}(N^{\frac{3}{2}})$. Interestingly, the similar case for CMCQ in \autoref{fig:alg_cross_model_query_example}, the query $Q$ = $a[b]/c \bowtie R(b,c)$ has also the size bound $\mathcal{O}(N^{\frac{3}{2}})$. 

\section{Worst-case Size bound}\label{sec:size_bound}

Given a CMCQ $Q$, our analysis will be carried out in two assumptions: (1) regarding the tree database, given any label $s \in \Sigma$, the number of nodes with the label $s$ is at most $N$; (2) regarding the relational database, the size of each table is also at most $N$. Based on these two assumptions,  we study how to find the maximum bound of the size of answers for $Q$ in the worst case. In this section, we start our investigation with two simple yet important special cases followed with a generalized algorithm and optimizations.



\subsection{Glance at two special cases}\label{sec:size_bound_multi_model}

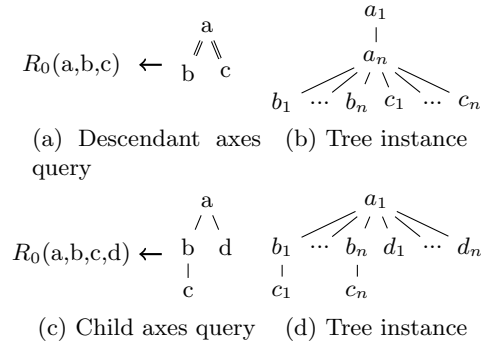
\begin{figure}[t!]\centering
     \begin{subfigure}[t]{0.25\textwidth}
        \centering
        
\begin{tikzpicture}[font=\small,
	level distance=0.6cm,
  level 1/.style={sibling distance=0.5cm},
  level 2/.style={sibling distance=0.5cm},
]
\useasboundingbox [fill=gray!0] (0,0) rectangle (30mm,15mm);

\node[black] at (5mm,6mm)  {$R_0$(a,b,c)};

\draw  [<-, thick] (14mm,6mm) -- (17mm,6mm);

\node[black] at (23mm,11mm)  {a}
    child {node[black] {b}[style = {double,black}]}
    child {node[black] {c}[style = {double,black}]};

\end{tikzpicture}
        \caption{Descendant axes query}
        \label{fig:twig_with_ad}
\end{subfigure}%
\begin{subfigure}[t]{0.25\textwidth}
        \centering
        
\begin{tikzpicture}[font=\small,
	level distance=0.6cm,
  level 1/.style={sibling distance=0.5cm},
  level 2/.style={sibling distance=0.5cm},
]
\useasboundingbox [fill=gray!0] (0,0) rectangle (30mm,15mm);

\node[black] at (15mm,13mm)  {$a_1$}[style = {dotted}]
    child {node[black] {$a_n$}[style = {solid}]
    child {node[black] {$b_1$}}
    child {node[black] {...}}
    child {node[black] {$b_n$}}
    child {node[black] {$c_1$}}
    child {node[black] {...}}
    child {node[black] {$c_n$}}};

\end{tikzpicture}
        \caption{Tree instance}
        \label{fig:twig_with_ad_instance}
\end{subfigure}%

\begin{subfigure}[t]{0.25\textwidth}
        \centering
        
\begin{tikzpicture}[font=\small,
	level distance=0.6cm,
  level 1/.style={sibling distance=0.5cm},
  level 2/.style={sibling distance=0.5cm},
]
\useasboundingbox [fill=gray!0] (0,0) rectangle (30mm,15mm);

\node[black] at (5mm,6mm)  {$R_0$(a,b,c,d)};

\draw  [<-, thick] (14mm,6mm) -- (17mm,6mm);

\node[black] at (23mm,13mm)  {a}
    child {node[black] {b}
    child {node[black] {c}}}
    child {node[black] {d}};

\end{tikzpicture}
        \caption{Child axes query}
        \label{fig:twig_with_pc}
\end{subfigure}%
\begin{subfigure}[t]{0.25\textwidth}
        \centering
        
\begin{tikzpicture}[font=\small,
	level distance=0.6cm,
  level 1/.style={sibling distance=0.5cm},
  level 2/.style={sibling distance=0.5cm},
]
\useasboundingbox [fill=gray!0] (0,0) rectangle (30mm,15mm);

\node[black] at (15mm,13mm)  {$a_1$}
    child {node[black] {$b_1$}
        child {node[black] {$c_1$}}}
    child {node[black] {...}}
    child {node[black] {$b_n$}
        child {node[black] {$c_n$}}}
    child {node[black] {$d_1$}}
    child {node[black] {...}}
    child {node[black] {$d_n$}};

\end{tikzpicture}
        \caption{Tree instance}
        \label{fig:twig_with_pc_instance}
\end{subfigure}%
     \caption{Tree conjunctive queries with (a) Descendant axes and (c) Child axes, and their worst-case instance tree (b) for query (a) and tree (d) for query (c).}
     \label{fig:example_tree_query}
     \end{figure}

 A cross-model query $Q$ include two components: relation expressions and tree patterns. As mentioned in \autoref{sec:problem_statement}, the conditions of relational part can be captured through LP inequalities in the AGM bound. Now the key challenge is how to represent the tree structure with the inequalities. In fact, the most
important component of our algorithms is a method to appropriately define LP inequality constraints for the tree pattern. To this end, we start our journey with two special cases, where the tree pattern contains only child or descendant axes, which shed light on the computation of a general case later.  




\noindent \textbf{Only descendant axes} ~ When the tree pattern in $Q$ contains only descendant axes,  it is a lucky fluke, thus there is no need to add any extra inequalities beside the existing relational inequalities. For example, consider a query $Q$ in \autoref{fig:twig_with_ad} is a tree pattern. \autoref{fig:twig_with_ad_instance} shows an instance tree which realize the worst size bound. The result of $Q$ is $\mathcal{O}(N^3)$ from all combination result of three nodes. Therefore, the existence of tree patterns with only descendant axes does not require any extra inequalities.

\smallskip

\noindent \textbf{Only child axes} ~ When the tree pattern in $Q$ contains only child axes,   this case is different from the first one. For each  path in the tree pattern, we need to add one parent-child (PC)-path inequality constraint. This is because each child node can have only one  parent node to match the axis condition.

\begin{definition} [\textbf{PC-path inequality}]  Let $P$ be a PC path in the tree pattern $\mathbf{T}$ and $A_P$ denote the set of labels of nodes in $P$. The  PC-path inequality of $P$  is defined as $\Sigma^{A_P}_r x_r \leq 1$. \label{def:PCPath}
\end{definition}
 
The matching result of attributes in a inequality is $\mathcal{O}(N)$ in relation as they are in one table and in PC-path, as they have one-to-one parent-child relationship.
For example, see \ref{fig:twig_with_pc} for an example of query with only child axes. The pattern matching result of attribute ($a$,$b$,$c$) and ($a$,$d$) are both $\mathcal{O}(N)$. So, the PC-path inequalities are $x_a + x_b + x_c \leq 1$ and $x_a + x_d \leq 1$. The solution is 2 in this case, meaning that the result of $Q$ in \autoref{fig:twig_with_pc} is $\mathcal{O}(N^2)$. \autoref{fig:twig_with_pc_instance} shows one of the worst-case construction tree.

\smallskip
\smallskip

\noindent \textbf{Mixed child and descendant axes}  ~ Given a tree pattern with both child and descendants axis relations, one may wonder whether the relational inequalities and PC-path inequalities are sufficient to produce the correct bound. Unfortunately,  this situation is more complicated. The fact is that all relational and PC-path inequalities do not suffice to derive the correct bound,  as illustrated below. 

\begin{example}\label{example:size_bound4corss_model_query}
Consider the query in \autoref{fig:twig_with_confilct1} with only a tree pattern.  The corresponding PC-path inequalities are  $x_a+x_b \leq 1$,~$x_a+x_c \leq 1$,~and $x_d \leq 1$.  Thus, the maximum value of $x_a+x_b+x_c+x_d$ is 3 when $x_b$=$x_c$=$x_d$=1 and $x_a$=0 (one of the possible solution). However, it is infeasible to construct a tree instance with $\Theta(N^3)$ to match the result. In fact, the tight upper bound is only $\mathcal{O}(N^2)$. \autoref{fig:twig_with_confilct_instance1} and \autoref{fig:twig_with_confilct_instance2} show two instances of trees in the worst case situation. 

In this case, when we obtain $\mathcal{O}(N^2)$ result for $b$ and $c$ in \autoref{fig:twig_with_confilct_instance1}, $c$ and $d$ can not yield $\mathcal{O}(N^2)$ result any more, meaning the $c$ and $d$ is no more equivalent to no constraint for descendant axe. And vice verse. So bound seems to be two alternatives: (i) $x_a+x_b \leq 1$,~$x_a+x_c+x_d \leq 1$, and (ii) $x_a+x_b+x_c \leq 1$,~$x_d \leq 1$, responding to two instance trees. They obtain the same size bound $\mathcal{O}(N^2)$. These two different alternatives are meaningful to compute the size bound with more complex case.
For example, give relation $R_1(b,c,d)$ (corresponding to $x_b+x_c+x_d \leq 1$), then with inequalities $(ii)$, it leads to the maximum bound $2$, while with the inequalities (i) it can obtain only $\frac{3}{2}$.  For another example with relation $R_3(b,d)$ and $R_4(a,c,d)$ (corresponding to $x_b+x_d \leq 1$ and $x_a+x_c+x_d \leq 1$), this time with inequalities (i) is a winner with maximum value $2$, comparing to $\frac{3}{2}$ for the inequalities $(ii)$.
\end{example}

 \begin{figure}[t!]
     \centering
    \small
     \begin{subfigure}[t]{0.30\textwidth}
        \centering
        
\begin{tikzpicture}[font=\small,
	level distance=0.6cm,
  level 1/.style={sibling distance=0.35cm},
  level 2/.style={sibling distance=0.35cm},
]
\useasboundingbox [fill=gray!0] (0,0) rectangle (22mm,15mm);

\node[black] at (4mm,11mm)  {$R_0$(a,b,c,d)};

\draw  [<-, thick] (13mm,11mm) -- (16mm,11mm);

\node[black] at (21mm,16mm)  {a}
    child {node {b}}
    child {node {c}
        child {node[black] {d}[style = {double,black}]}};

\end{tikzpicture}
        \caption{Tree query}
        \label{fig:twig_with_confilct1}
\end{subfigure}%
\begin{subfigure}[t]{0.34\textwidth}
        \centering
        
\begin{tikzpicture}[font=\small,
	level distance=0.6cm,
  level 1/.style={sibling distance=0.35cm},
  level 2/.style={sibling distance=0.35cm},
]
\useasboundingbox [fill=gray!0] (0,0) rectangle (22mm,15mm); 

\node[black] at (11mm,16mm)  {$a_1$}
    child {node {$b_1$}}
    child {node {...}}
    child {node {$b_n$}}
    child {node {$c_1$}
        child {node[black] {$d_1$}}}
    child {node {...}}
    child {node {$c_n$}
        child {node[black] {$d_n$}}}
    ;

\end{tikzpicture}
        \caption{Tree instance 1}
        \label{fig:twig_with_confilct_instance1}
\end{subfigure}%
\begin{subfigure}[t]{0.34\textwidth}
        \centering
        
\begin{tikzpicture}[font=\small,
	level distance=0.5cm,
  level 1/.style={sibling distance=0.4cm},
  level 2/.style={level distance=0.4cm, sibling distance=0.4cm},
  level 3/.style={level distance=0.4cm, sibling distance=0.4cm},
  level 4/.style={level distance=0.5cm, sibling distance=0.4cm},
]
\useasboundingbox [fill=gray!0] (0,0) rectangle (22mm,24mm); 

\node[black] at (8mm,23mm)  {$a_1$}
    child {node {$b_1$}}
    child {node {$c_1$}
        child {node {$...$}
            child {node {$a_n$}
            child {node {$b_n$}}
            child {node {$c_n$}
                child {node {$d_1$}}
                child {node {...}}
                child {node {$d_n$}}
                }}
        }
    }
    ;

\end{tikzpicture}
        \caption{Tree instance 2}
        \label{fig:twig_with_confilct_instance2}
\end{subfigure}%
     \caption{Tree conjunctive query (a) and its alternative worst-case instance tree (b) and tree (c).}
     \label{fig:example_tree_query2}
 \end{figure}
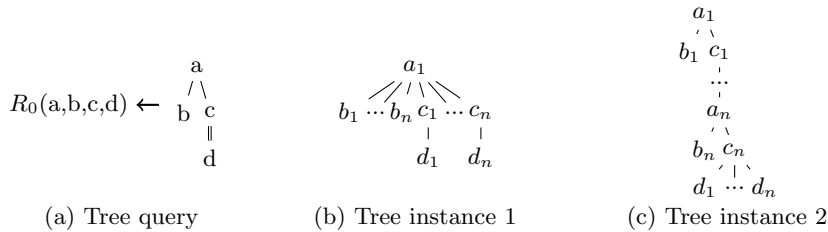

Given a query with both ancestor and child axes, the above example hints at a possible approach to generate inequalities. That is, multiple options of LP problem settings need to be generated. In contrast to the AGM bound (with polynomial complexity),  the computation of the size bound for a CMCQ is in general $\mathcal{NP}$-hard with respect to query complexity (i.e. the complexity is measured in the size of the query).

\begin{theorem}[\textbf{NP-hardness}]\label{theo:np-hard} The query complexity of the worst-case bound evaluation of
databases for a cross-model conjunctive query is $\mathcal{NP}$-hard. 
\end{theorem}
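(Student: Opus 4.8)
The plan is to establish $\mathcal{NP}$-hardness by a polynomial-time reduction from a known $\mathcal{NP}$-complete problem, encoding its structure into the tree pattern of a CMCQ so that computing the worst-case size bound forces us to solve the hard problem. The phenomenon exposed in \autoref{example:size_bound4corss_model_query} is the crux: mixed child and descendant axes do not admit a single canonical set of LP constraints; instead, each valid set of PC-path inequalities corresponds to a way of \emph{partitioning} the descendant edges of the pattern into maximal parent--child paths, and different partitions yield different bounds. Computing the true worst-case bound therefore amounts to optimizing over an exponential family of such partitions. I would exploit exactly this combinatorial freedom to embed a hard problem.

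First I would fix a target problem whose structure mirrors the ``conflicting alternatives'' seen in the example --- a natural candidate is \textsc{Independent Set} or a suitable \textsc{Set Packing}/\textsc{Max-3-SAT}-style problem, since the LP objective $\Sigma_r x_r$ being maximized subject to packing-type constraints $\Sigma^{A_P}_r x_r \le 1$ already resembles a fractional packing LP. Given an instance of the source problem (say a graph $G=(V,E)$), I would construct in polynomial time a tree pattern $\mathbf{T}$ (together with possibly a few relations) whose node labels correspond to vertices of $G$, and whose child/descendant axes are arranged so that the feasible families of PC-path inequalities are in correspondence with the combinatorial objects of $G$ (e.g. the ways of selecting edges/paths). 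The second step is to prove the reduction correct: I would show that the optimal worst-case size exponent $\rho$ for the constructed CMCQ equals (a simple function of) the optimum of the source problem, so that deciding whether $\rho \ge k$ decides the source problem.

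The key technical steps, in order, are: (i) formalize the claim implicit in \autoref{example:size_bound4corss_model_query} that the correct bound is the \emph{minimum over all valid partitions of descendant edges into PC-paths}, of the AGM-style LP optimum under the resulting constraints --- this gives a precise $\min$--$\max$ characterization of the bound; (ii) design the gadget tree so that each vertex-selection in $G$ induces a distinct partition, and verify the tree instances realizing each bound actually exist (mirroring the explicit worst-case trees in \autoref{fig:twig_with_confilct_instance1} and \autoref{fig:twig_with_confilct_instance2}); and (iii) complete the two-way implication between the value of the bound and the optimum of $G$, checking that the reduction runs in polynomial time in the query size.

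The main obstacle, I expect, will be step (i): pinning down and proving the exact $\min$--$\max$ characterization of the worst-case bound in the presence of mixed axes. The difficulty is twofold --- establishing the upper bound requires showing that \emph{every} tree instance respects at least one valid PC-path partition (so the $\min$ over partitions is a genuine upper bound on output size), while the matching lower bound requires explicitly constructing, for the optimal partition, a family of trees whose output scales as $N^{\rho}$. Once this characterization is in hand, the embedding of the source problem and the verification of correctness should be comparatively routine, though care is needed to ensure the gadget's descendant structure does not accidentally introduce extra feasible partitions that would collapse the hardness.
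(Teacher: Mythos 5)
Your proposal has a genuine gap, and it sits exactly where you predicted the difficulty would be: step (i), the characterization on which the whole reduction rests. You claim the worst-case bound equals the \emph{minimum} over valid PC-path partitions of the corresponding LP optimum, and you justify this by saying that every tree instance respects at least one partition. That premise yields the opposite conclusion: if instance $I$ respects some partition $P_I$, then $|Q(I)| \le \mathrm{LP}(P_I) \le \max_P \mathrm{LP}(P)$, and nothing bounds $|Q(I)|$ by the minimum. The paper's own \autoref{example:size_bound4corss_model_query} refutes the min-characterization: for $a[b]/c//d$ joined with $R_1(b,c,d)$, the two alternatives give exponents $3/2$ and $2$, and a chain-shaped instance in the style of \autoref{fig:twig_with_confilct_instance2} (one $b$-label, one $c$-label, $N$ distinct $a$-labels, $N$ distinct $d$-labels, relation $\{(b_0,c_0,d_k)\}_{k\le N}$) realizes output $\Theta(N^2)$; so the true bound is the maximum, $2$, not the minimum, $3/2$. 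This is also what \autoref{alg:computBound} computes (``return the maximum value for all solutions of LP problems''). The error is not cosmetic: under the correct max-characterization, ``bound $\ge k$'' is witnessed by a single suite together with an LP solution (an existential, NP-type statement), so hardness must be engineered by forcing consistent discrete choices; under your min-characterization, ``bound $\ge k$'' becomes a universally quantified condition over all partitions, which does not align with the existential structure of \textsc{Independent Set}/\textsc{Set Packing} and would derail the two-way implication you plan in step (iii).

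The second gap is that the reduction itself---the actual technical content---is never constructed. Saying you would ``arrange the child/descendant axes so that feasible partitions correspond to the combinatorial objects of $G$'' restates the goal rather than achieving it. Because the bound is a maximum of LP optima over exponentially many suites, the hard part is building gadgets with an all-or-nothing property that ties choices in different parts of the query together. This is precisely what the paper's proof does: it defines gadgets $\mathcal{K}_1$ and $\mathcal{K}_2$ (small mixed-axis tree patterns joined with relations) and proves forcing lemmas (\autoref{lemma:np_building_block} and \autoref{lemma:np_building_block2}) showing that attaining $\Theta(N^2)$ forces the relevant LP variables to be either all $1$ or all $0$; it then wires these boolean gadgets into a reduction from \textit{1-IN-3SAT} via within-clause, variable-consistency, and between-clause restrictions, and proves the bound is $\Theta(N^{2m})$ if and only if the formula is 1-in-3 satisfiable. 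Your plan would need analogues of these forcing lemmas, plus an argument that the gadget's descendant edges do not admit unintended suites that collapse the bound---you flag this last risk yourself, but flagging it is not resolving it. As it stands, the proposal is a strategy outline built on an incorrect min--max characterization, not a proof.
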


The main idea of the proof is to polynomially reduce the 1-IN-3SAT problem~\cite{schaefer1978complexity} to our problem. See the Appendix in \autoref{sec:supplement_proof} for  the proof.

\smallskip

\noindent \textbf{Remark} ~ Note that this complexity is respect to the size of query. It should not be confused with the data complexity of query answering algorithm in \autoref{sec:approach} later, which has polynomial complexity with respect to data complexity. Considering a practical size of a query is limited, from the point of view of applications, the above theorem mainly makes it of theoretical interest. In contrast to relational conjunctive queries that are \textit{tractable} with respect to query complexity, this paper makes a contribution to demonstrate the theoretical complexity gap due to the occurrence of tree pattern in a conjunctive query.

\subsection{Recursive conversion and split}


This subsection develops a concrete algorithm to compute the worst-case bound. Here the high level idea is that we eliminate all descendant axis relations in the tree pattern recursively by two operations, called \textit{conversion} and \textit{split}, until the final tree pattern remains only child axis  which can be solved through LP solutions. Then we acquire the size bound by picking the maximum solution for all generated LP problems.

\begin{definition} [\textbf{Conversion and split operations}] Let T be a tree pattern and $\alpha$  be a descendant axis between $x$ and $y$ nodes in T.  Assume that $x$ is the parent of $y$.

\vspace*{-0.5mm}
\begin{enumerate}[leftmargin=*,label=-]
\item \textbf{Conversion}:  $T \mid \alpha$ denotes an operation to convert the descendant axis $\alpha$ to the child axis between $x$ and $y$.
\item \textbf{Split}:   $T \parallel \alpha$ denotes an operation to remove $\alpha$ from $T$ and thereby $T$ is split into two subtrees.  It is important to note that this split operation must be adjoined with one or multiple compensation inequalities defined below.
\end{enumerate}
\end{definition}


\begin{algorithm} \caption{Computing bound}\label{alg:computBound}
\KwIn{A cross-model conjunctive query $Q$=\{$T, R$\}}
\KwOut{Worst-case output size bound}

$\mathbb{C} \gets CP(T, \emptyset)$ \tcp*{compute \textit{canonical suites}}
\ForEach{ $ (CT,CI) \in \mathbb{C}$}{ Generate the LP inequalities  induced by $CT, CI$ and $R$ \\ Solve the associated LP problem }

Return the maximum value for all solutions of LP problems \\

\end{algorithm}

\begin{definition} [\textbf{Compensation inequalities}] With respective to the split operation in a tree pattern $T$, assume that node $x$ is split from node $y$. Let $P$ denote  the root-to-$x$ path. For each root-to-leaf path $P'$  that  does not contain $x$, let $\mathbf{A}$ denote  all labels  in $P$ and $P'$, then we generate a compensation inequality of $P'$:   $\Sigma x_r  \leq 1$  for all  labels $r \in \mathbf{A}$.
 \end{definition}
 
 To understand the reason of compensation inequalities, recall the tree pattern in 
 \autoref{example:size_bound4corss_model_query}. When node $c$ is split from node $d$, a compensation inequality is emitted: $x_a+x_b+x_c \leq 1$. This is because due to the split of node $c$ and $d$ in the tree pattern, in the worst-case tree instance,  each node $c$  should match each node $d$  (see \autoref{fig:example_tree_query2} c). Meanwhile, node $c$  must also have one parent node $a$  and this node $a$ must have at least one child $b$ node. Therefore,  $x_a+x_b+x_c \leq 1$ is necessary to capture those structural constraints.

\autoref{alg:computBound} illustrates the main steps to compute the bound. The input is a single tree pattern $T$ and relations $\mathbb{R}$. If there are multiple tree patterns, then we can easily merge them together by using a dummy root.  The output is the value of the worst-case size bound. The key idea of this algorithm is to generate all \textit{canonical suites} which can be converted to LP inequalities, as formally defined below.

\begin{definition} [\textbf{Suites and canonical suites}] A suite is a triple tuple ($\mathbb{R}$,$\mathbb{C}$,$\mathbb{T}$), where $\mathbb{R}$ denotes relations, $\mathbb{C}$  compensation inequalities and $\mathbb{T}$ tree patterns. In particular,  we say one suite is \textbf{canonical} if all tree patterns in $\mathbb{T}$ contain only child axes. A canonical suite can be directly converted  to a set of inequalities during the computation.
 \end{definition}

 
 Given any canonical suite ($\mathbb{R}$,$\mathbb{C}$,$\mathbb{T}$), the LP problem setting can be generated as follows:
 
 \begin{equation}\label{eq:lp_slution1}
\begin{aligned}
 \quad  & \underset{}{\text{maximize}}
& &  \Sigma^\mathcal{A}_r x_r  \\
& \text{subject to}
& &  \Sigma^{A_R}_r x_r \leq 1  & & for\ all\ R \in \mathbb{R} \\
&   
& &  \Sigma^{A_P}_r x_r \leq 1  & & for\ all \ P \in \mathbb{T}  \\
&   
& &  \Sigma^{A_C}_r x_r \leq 1  & & for\ all \ C \in \mathbb{C}  \\
&   
& &  0 \leq x_{r} \leq 1 & & for\ all\ r \in \mathcal{A}\\
\end{aligned}
\end{equation}

\begin{algorithm}[b!]
    \linespread{1.20}
    \caption{Find all canonical suites $CS(T,CI,R)$}
    \label{alg:simple_minors}
    
\KwIn{A tree pattern query $T$ and the associated compensation inequalities $CI$ and relations R }
\KwOut{A set of canonical suites $\mathbb{C}$}

\If{there exists any descendant axis in  $T$}{ Let $\alpha$ be a highest descendant axis  in $T$ \\
$T_1 \gets T \mid \alpha$ \tcp*{tree conversion}
 $\mathbb{C} \gets CS(T_1,CI,R)$ \\
 $T_2, T_3, CI \gets T \parallel \alpha$ \tcp*{tree split}
 
 \ForEach{ $(CT_2,CI_2,R)$ in $CS(T_2,CI,R)$}{\ForEach{ $(CT_3,CI_3,R)$ in $CS(T_3,CI,R)$}  {$\mathbb{M}$ = $\mathbb{C}$ $\cup$ \{$(CT_2 \cup CT_3, CI_2 \cup CI_3,R)$\} }  }
 return $\mathbb{C}$ \\
}
\Else{return $\big\{ (\{T\}, CI,R) \big\}$ \\}

\end{algorithm}

\autoref{alg:simple_minors} illustrates the  procedure to generate all canonical suites. At a high level, the
main idea is to  traverse the query tree pattern $T$ in a top down fashion  to recursively eliminate all descendant axes through split and conversion operations.  Let us walk through the algorithm. If there is any descendant axis, then we pick a highest one $\alpha$,  to which there is no descendant axis in the path from root. Line 3 performs conversion operation and Line 4 adds the generated suites to $\mathbb{C}$.  Then Line 5 performs the conversion operation and Line 6-8 recursively call the functions to process subtrees which are generated from split operation.   When there is no descendant axis, Line 11 returns the canonical suite.

\begin{figure*}[t!]
     \centering
    \small
     \includegraphics[width=0.85\linewidth]{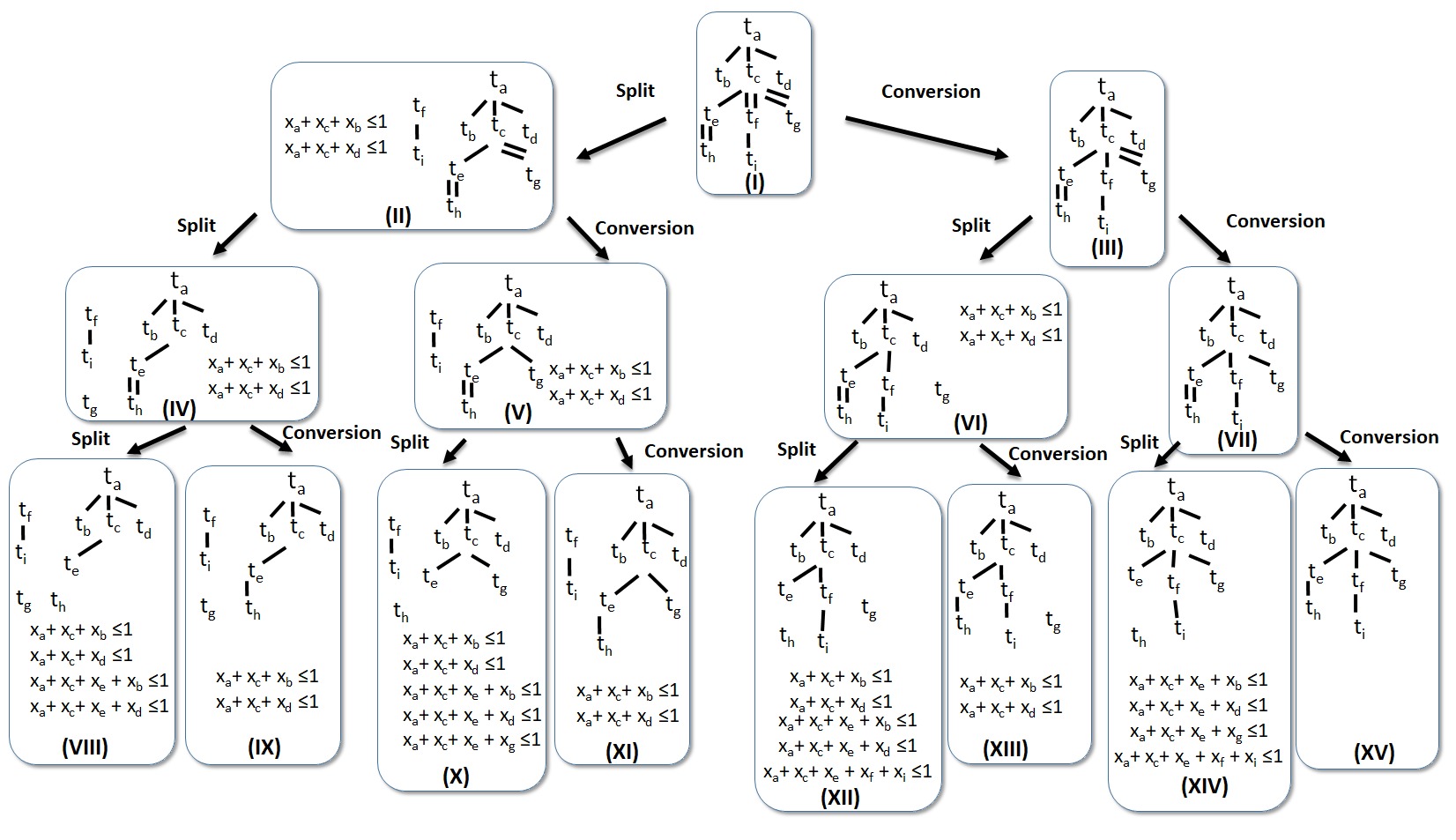}
     \vspace*{1mm}
     \caption{The tree query example for illustrating conversion and split.}
     \label{fig:new_example_mutation}
 \end{figure*}

\begin{example}\label{example:compute_size_bound_brute_force}
\autoref{fig:new_example_mutation}  depicts the detailed procedure to process a cross-model query, where the tree query is shown in \autoref{fig:new_example_mutation}(I) and the two relations are  $R_1(a,c,e,h)$ and $R_2(h,g,i)$. Since
$T$ has three  descendant axes, Algorithm 2 generates eight ($2^3$) sets of canonical suites. Then those canonical suites are converted to LP inequalities in Algorithm 1. By solving all LP problems, the maximal value of bound is $\mathcal{O}(N^{5})$, where $x_b = x_d = x_e =x_f = x_g = 1$.  In particular, the max bound of each suite is shown as follows: suite (VIII) = 4, suite (IX) = 5, suite (X) = 4, suite (XI) = 5,
suite (XII) = 4,
suite (XIII) = 5,
suite (XIV) = 4, and
suite (XV) = 5,
\end{example}

\begin{theorem}
Our algorithm finds the correct size bound. \label{them:correctOfAlgorithm}
\end{theorem}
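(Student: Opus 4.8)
The plan is to establish two directions separately: soundness (every LP solution produced by the algorithm corresponds to a constructible tree instance, so the reported value is an achievable lower bound on the worst-case output) and completeness (no tree instance can exceed the maximum value the algorithm returns, so it is a valid upper bound). Since the algorithm takes the maximum over all canonical suites, and each canonical suite yields a standard AGM-style LP whose optimum is a genuine size bound by the results reviewed earlier, the core of the argument reduces to showing that the set of canonical suites generated by Algorithm~\ref{alg:simple_minors} faithfully captures exactly the structural constraints that a mixed child/descendant tree pattern can impose. I would proceed by induction on the number of descendant axes in $T$.

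For the inductive step, fix a highest descendant axis $\alpha$ between parent $x$ and child $y$. The key structural observation --- already foreshadowed by \autoref{example:size_bound4corss_model_query} --- is that in any worst-case tree instance realizing the bound, the descendant relation $\alpha$ behaves in one of exactly two extremal ways: either the matches of $y$ collapse so that $\alpha$ acts like a child axis (the \emph{conversion} branch $T \mid \alpha$), or the matches of $y$ multiply independently of the matches above $x$, in which case the subtree rooted below $\alpha$ detaches (the \emph{split} branch $T \parallel \alpha$) and must pay for the detachment through the compensation inequalities. First I would prove that these two branches are \emph{exhaustive}: any feasible realization is dominated, in output size, by one of the two extremal configurations. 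This is where I would lean on the one-to-one nature of the parent-child relationship encoded by the PC-path inequality (\autoref{def:PCPath}) --- an intermediate "partial stretching" of $\alpha$ never beats both extremes simultaneously because the LP objective is linear and the feasible region is a polytope, so its maximum is attained at a vertex corresponding to one pure alternative.

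Next I would verify that the compensation inequality emitted on the split branch is both \emph{necessary} and \emph{sufficient}. Necessity follows from the construction argument given after the definition: when $x$ is split from $y$, each match of $x$ still requires a full root-to-$x$ path together with every sibling root-to-leaf subtree, so all labels on $P$ and each such $P'$ must jointly satisfy $\Sigma x_r \le 1$; omitting this would let the LP overcount by treating detached regions as unconstrained, contradicting the true $\mathcal{O}(N^2)$ tightness seen in \autoref{example:size_bound4corss_model_query}. Sufficiency --- that no \emph{further} constraint is forced --- is the direction I expect to be the main obstacle, because it requires exhibiting, for each canonical suite, an explicit family of tree instances parameterized by $N$ that simultaneously saturates all the suite's inequalities; I would build these by the "product of paths" gadget illustrated in \autoref{fig:twig_with_pc_instance} and \autoref{fig:twig_with_confilct_instance2}, stacking the conversion-branch paths linearly and fanning out the split-branch subtrees as independent $N$-way bushes, then checking that the label-$s$-at-most-$N$ and table-size-at-most-$N$ assumptions are met with equality on the binding constraints. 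Finally, since Algorithm~\ref{alg:simple_minors} recurses on both $T \mid \alpha$ and the pair of subtrees from $T \parallel \alpha$ and unions the resulting suites, the induction hypothesis gives correctness for each branch, and taking the maximum in Algorithm~\ref{alg:computBound} over all $2^k$ suites (for $k$ descendant axes) selects the true worst case, completing the proof.
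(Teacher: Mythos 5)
A preliminary caveat: there is no paper proof to compare against. Although the text says ``The proof can be found in the Appendix,'' the appendix contains only the 1-IN-3SAT reduction establishing \autoref{theo:np-hard} together with its two building-block lemmas; no argument for \autoref{them:correctOfAlgorithm} (nor for \autoref{lem:split} or \autoref{lemma:optimization2}) actually appears anywhere in the paper. Your proposal therefore has to be judged on its own merits. Its skeleton is sensible and consistent with the paper's informal remarks: split correctness into tightness (some instance family achieves the reported value) and domination (no instance exceeds it), induct on the number of descendant axes, and reduce everything to (i) exhaustiveness of the conversion/split dichotomy and (ii) necessity and sufficiency of the compensation inequalities.

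The genuine gap is in your justification of (i), which is the heart of the upper-bound direction. You argue that an intermediate ``partial stretching'' of a descendant axis cannot beat both extremes because ``the LP objective is linear and the feasible region is a polytope, so its maximum is attained at a vertex.'' This is a category error. Conversion and split produce two \emph{different} constraint systems, hence two different polytopes; the choice between them is discrete, and a concrete tree instance is not a point of either polytope --- the variables $x_r$ are exponents in a hypothetical worst-case construction, not parameters describing a given instance. What the upper bound actually requires is a combinatorial claim about arbitrary instances: for every database $D$ there is a choice of conversion or split per descendant axis such that \emph{every} inequality of the resulting canonical suite is a valid cardinality constraint on $D$ (its label set must correspond to a ``virtual relation'' with at most $N$ tuples containing the projection of the output), after which the AGM machinery can be applied suite-wise and the maximum taken. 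That this choice must depend on the instance is exactly what \autoref{example:size_bound4corss_model_query} shows: for the pattern of \autoref{fig:twig_with_confilct1}, the instance in \autoref{fig:twig_with_confilct_instance1} satisfies the conversion inequalities but violates the compensation inequality $x_a+x_b+x_c\leq 1$ (the sub-pattern $a[b]/c$ already has $\Theta(N^2)$ matches there), whereas the instance in \autoref{fig:twig_with_confilct_instance2} satisfies the split inequalities but violates the conversion inequality $x_a+x_c+x_d\leq 1$ (the pattern $a/c//d$ has $\Theta(N^2)$ matches there). A proof must construct this instance-dependent selection, e.g.\ by classifying output tuples or by an induction showing $|Q(D)|$ is bounded by the larger of the two branches' bounds; no vertex argument can substitute for it. The second, smaller gap is one you flag yourself: tightness requires, for the maximizing suite, an explicit instance family that simultaneously saturates all binding PC-path, compensation, and relational constraints, with the relations populated consistently with the tree's labels; the ``stacked paths plus bushes'' gadget is a plausible start, but without verifying simultaneous saturation the lower-bound direction also remains open.
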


The proof can be found in the Appendix.

\subsection{Two optimization rules}

Given a tree pattern with $n$ descendant axes,  the above algorithm  needs to generate all $2^n$ canonical suites to find the maximum bound. Although in the worst case,  this  complexity nature cannot  be alleviated if $NP \neq P$, this subsection  proposes two rules to significantly reduce the search space in most cases to improve the  efficiency.

\smallskip
\noindent \textbf{Optimization 1}:  Given any single tree pattern $T$, assume that there is already a conversion operation performed on $T$, then we observe that any following split operations can be safely canceled without affecting the correctness of the  computation. For example, recall \autoref{fig:new_example_mutation}. We can safely stop the computation for all conversion operations after the split. That is, suites (X), (XII), (XIII) and (XIV) can be canceled. Note that although suite (XIII)  generates the maximal value 5,  this value can be also provided from other suites (e.g. from suites (IX) and (XI)).

\begin{lemma}
Given a single tree pattern $T$, the conversion operations in $T$ after the spit are avoidable without affecting the final result of worst-case bound. \label{lem:split} 
\end{lemma}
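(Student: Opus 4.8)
The plan is to prove the statement directly. Since the monotone suites (those that never convert a descendant axis lying inside a subtree produced by an earlier split) form a subset of the $2^n$ suites enumerated by \autoref{alg:simple_minors}, their maximum LP value is automatically at most the value returned by the full algorithm, which by \autoref{them:correctOfAlgorithm} is the true worst-case bound $B$. Hence it suffices to prove the reverse inequality: some monotone suite attains $B$. Equivalently, whenever a non-monotone suite $S$ realizes the maximum, I will produce another maximizing suite with fewer \textbf{violations}, a violation being a conversion performed on an axis $\beta$ that sits below a split at a higher axis $\alpha$. Iterating the reduction, using a lexicographic potential on the violations to guarantee termination, drives the number of violations to zero and exhibits a monotone maximizer.

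First I would isolate a single violation. Among all maximizing suites take one, fix a converted axis $\beta$ together with an immediate splitting ancestor $\alpha$, and let $x^{\ast}$ be an optimal solution of the LP of $S$, so that $\langle \mathbf{1}, x^{\ast}\rangle = B$. The inequalities of $S$ that refer to the resolution of $\beta$ (and, for the alternative edit, of $\alpha$) are the only ones I am allowed to alter; all relation inequalities, all inequalities from the remaining axes, and all compensation inequalities produced by the other split steps are shared with the candidate target suites and are already met by $x^{\ast}$.

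Next I would compare $S$ with its two neighbors obtained by a single local edit that removes the violation: $S_{\mathrm{split}}$, which splits $\beta$ instead of converting it, and $S_{\mathrm{conv}}$, which converts $\alpha$ instead of splitting it. In both cases every PC-path inequality of the new suite follows from a longer path inequality of $S$ by discarding non-negative terms, so the only inequalities not already implied by $S$ are the newly generated compensation inequalities; the task is therefore to redistribute the mass of $x^{\ast}$ to satisfy those compensations while holding the objective at $B$. For $S_{\mathrm{split}}$ the natural move pushes weight off the split-parent label of $\beta$ into the subtree freed below $\beta$ (whose path inequalities the split loosens), which works precisely when the relation inequalities do not pin that lower mass; for $S_{\mathrm{conv}}$ the move instead reconnects the subtree to the root region and pulls weight upward along the now-converted $\alpha$ edge. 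I would show that at $x^{\ast}$ at least one of these two redistributions preserves both feasibility and objective value, by a case analysis on which relation and compensation inequalities are tight at $x^{\ast}$.

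The main obstacle is exactly this last step, and it is genuinely non-routine: as \autoref{example:size_bound4corss_model_query} already shows, neither conversion nor splitting dominates the other pointwise once relations are present (the two relation choices there make the conversion bound and the split bound win in turn), so no single uniform exchange can work and each of $S_{\mathrm{split}}$, $S_{\mathrm{conv}}$ may have a strictly smaller optimum than $S$. What rescues the argument is a dichotomy: when the lower relations leave the subtree below $\beta$ free, the downward redistribution into $S_{\mathrm{split}}$ succeeds, and otherwise the blocking relations force the optimal mass high enough that the upward reconnection into $S_{\mathrm{conv}}$ loses nothing. The crux is to prove that the tight relation and compensation inequalities at $x^{\ast}$ can never simultaneously block both redistributions; I would establish this first on the two-axis skeleton of a split directly above a converted axis (the shape of \autoref{fig:twig_with_confilct1}) and then lift it to an arbitrary tree via the topmost-violation induction.
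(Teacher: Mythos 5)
You cannot actually be compared against the paper's argument here, because no such argument exists: despite the sentence ``The proof of \autoref{lem:split} can be found in Appendix,'' the appendix contains only the reduction proving \autoref{theo:np-hard}, so \autoref{lem:split} is left unproved in the paper. Judged on its own terms, your attempt has a genuine gap. The entire mathematical content of the lemma is your ``dichotomy'' --- that at a violation (a conversion at $\beta$ below a split at $\alpha$) at least one of the two neighbors $S_{\mathrm{split}}$, $S_{\mathrm{conv}}$ retains the optimum --- and you never prove it; you only announce that you ``would establish'' it by a case analysis on tight constraints. What you have is an exchange-and-terminate schema; the one step that \emph{is} the lemma is deferred. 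There is also a concrete bookkeeping error: for $S_{\mathrm{conv}}$, converting $\alpha$ does not produce weaker constraints plus compensations, it produces \emph{lengthened} PC-path inequalities that splice the root-to-$x$ prefix onto the paths of the previously detached subtree; these are strictly stronger than anything in $S$ (while $\alpha$'s old compensation inequalities are dropped), so your claim that ``the only inequalities not already implied by $S$ are the newly generated compensation inequalities'' is backwards for that edit.

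Worse, the dichotomy is false under the paper's stated definitions, so no repair of the deferred step can succeed. Take the single tree pattern with root $a$, a descendant axis from $a$ to $d$, child edges $d/g$ and $d/h$, a descendant axis from $g$ to $e$, and one relation $R(a,e)$. The violating suite that splits $(a,d)$ and then converts $(g,e)$ has LP constraints $x_a\le 1$, $x_d+x_g+x_e\le 1$, $x_d+x_h\le 1$, $x_a+x_e\le 1$ (the split of $(a,d)$ emits no compensation, since every root-to-leaf path contains $a$), with optimum $3$ at $x_a=x_g=x_h=1$, $x_d=x_e=0$. Your $S_{\mathrm{split}}$ neighbor yields $x_a\le1$, $x_d+x_g\le1$, $x_d+x_h\le1$, $x_d+x_g+x_h\le1$, $x_e\le1$, $x_a+x_e\le1$, optimum $2$; your $S_{\mathrm{conv}}$ neighbor yields $x_a+x_d+x_g+x_e\le1$, $x_a+x_d+x_h\le1$, $x_a+x_e\le1$, optimum $2$; and the remaining conversion-before-split suite also gives $2$. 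So both redistributions are blocked simultaneously, and in fact the maximum over all suites ($3$) strictly exceeds the maximum over suites with no conversion after a split ($2$): the literal statement of \autoref{lem:split} fails on this instance (incidentally, $2$ is the true worst-case exponent here, which casts doubt on \autoref{them:correctOfAlgorithm} as well). The underlying issue is that the paper is inconsistent about the orientation of the optimization --- \autoref{lem:split} excludes conversions after splits, whereas the statement of Optimization~1 and the discussion under \autoref{lemma:optimization2} exclude splits after conversions, and only that opposite orientation survives the example above (there, flipping every conversion that precedes a split into a split genuinely enlarges the feasible region, because the needed compensations are implied by longer ones). A correct proof must first resolve this ambiguity and then argue region-level domination in the surviving direction; a local exchange built around the lemma's literal wording cannot work.
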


The proof of \autoref{lem:split} can be found in Appendix. Intuitively, this result holds because, to construct the worst case tree instance,  a conversion operation requires the horizontal expansion of tree nodes, while a split operation demands the vertical expansion. The key observation is that the vertical expansion after the horizontal expansion cannot produce a new worst-case bound. Therefore, there is no need to carry out the conversion operation after the spit.  Further,  it is worthy to note that the above optimization can be applied only if the split and conversion operations are performed within the same tree. If the tree is split into two separated subtrees, then the split operation in the other subtree cannot be canceled.


\smallskip
 
\noindent \textbf{Optimization 2}:  Given any descendant axis $\alpha$, if there is no more descendant axis under $\alpha$, then we call $\alpha$ a leaf descendant axis. We observe that, if the condition in \autoref{lemma:optimization2} is satisfied, then the split operation for $\alpha$ can be safely canceled.

\begin{lemma} Given a leaf descendant axis $\alpha$ between node $x$ and $y$, assume that $A$ denotes all labels for the root-to-$y$ path,  we define an inequality $\Sigma^\mathcal{A}_r x_r \leq 1$. If this inequality cannot change the maximum solution  for the current LP problem, then the split operation in the leaf descendant axis can be safely canceled .
\label{lemma:optimization2}
\end{lemma}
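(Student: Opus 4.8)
The plan is to prove \autoref{lemma:optimization2} by showing that, under the stated hypothesis, the split branch at $\alpha$ yields no larger a bound than the conversion branch, so that omitting it cannot change the maximum returned by \autoref{alg:computBound}. The first step exploits that $\alpha$ is a \emph{leaf} descendant axis: the subtree rooted at $y$ carries only child axes, so the recursive elimination of every other descendant axis of $T$ in \autoref{alg:simple_minors} proceeds identically whether we apply $T\mid\alpha$ or $T\parallel\alpha$. The canonical suites produced along the two branches therefore come in matching pairs that share all relational constraints, all PC-path constraints from outside the $x$-$y$ region, and all compensation inequalities inherited from splits higher in $T$. It thus suffices to fix one such pair and prove $V_{\parallel}\le V_{\mid}$, where $V_{\mid}$ and $V_{\parallel}$ are the optimal LP values of the conversion and split members.

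The second step isolates where the two LPs differ. Let $P$ be the root-to-$x$ path, let $S_1,\dots,S_m$ be the child-only root-to-leaf paths of the $y$-subtree, and let $P'_1,\dots,P'_k$ be the root-to-leaf paths avoiding $x$. Conversion attaches the $y$-subtree to $x$ and so contributes the \emph{coupled} inequalities $\sum_{r\in P\cup S_j}x_r\le 1$, whereas split contributes instead the compensation inequalities $\sum_{r\in P\cup P'_i}x_r\le 1$ together with the decoupled inequalities $\sum_{r\in S_j}x_r\le 1$. Two comparisons are immediate: each conversion sibling constraint $\sum_{r\in P'_i}x_r\le 1$ is dominated by the split compensation constraint over $P\cup P'_i$, and every other constraint is literally identical in the two suites. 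Consequently, if we adjoin the coupled inequalities to the split LP, its feasible region sits inside that of the conversion LP, which already gives $\max(\text{split}+\text{coupled})\le V_{\mid}$.

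The third step reduces everything to the hypothesis. By the inclusion above it remains only to show that adjoining the coupled inequalities does not lower the split optimum, i.e.\ $V_{\parallel}=\max(\text{split}+\text{coupled})$. When $y$ is a leaf we have $S_j=\{y\}$, so each coupled inequality is $\sum_{r\in P\cup\{y\}}x_r\le1$, which is exactly the inequality $\sum_{r\in A}x_r\le1$ of the lemma with $A=P\cup\{y\}$ when no split lies above $x$, and is implied by it when a split has shortened the coupled path to a suffix of $P$ while $A$ still retains all of $P$. In either case the assumption that $\sum_{r\in A}x_r\le1$ does not change the current (split) optimum yields that the coupled constraints are non-binding, whence $V_{\parallel}=\max(\text{split}+\text{coupled})\le V_{\mid}$ and the proof closes.

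The step I expect to be the main obstacle is the remaining case, where $y$ has a proper child-subtree. Here $S_j$ strictly contains $\{y\}$, so $A=P\cup\{y\}$ and $P\cup S_j$ are \emph{incomparable}: the coupled constraint is no longer weaker than the hypothesised inequality, and its non-bindingness does not follow for free. The two feasible regions are genuinely incomparable --- a split-optimal point may carry mass both on $P$ and strictly below $y$ and thereby violate a coupled constraint, while a coupled-feasible point may overflow a compensation constraint --- so one cannot merely test whether a given split optimum is conversion-feasible. The difficulty is to show that a split optimum satisfying $\sum_{r\in A}x_r\le1$ can be transformed, without loss of objective, into one satisfying all coupled constraints: intuitively by transferring the weight sitting strictly below $y$ onto sibling labels inside their compensation budget, the budget being available precisely because $\sum_{r\in A}x_r\le1$ leaves the paths through $x$ unsaturated. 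I would carry this out either by an explicit exchange argument on an optimal vertex chosen to minimise $\sum_{j}\sum_{r\in P\cup S_j}x_r$, or, more cleanly, by an LP-duality certificate for $V_{\parallel}\le V_{\mid}$ that avoids tracking where the transferred mass lands; the duality route is the one I would expect to close the gap most safely.
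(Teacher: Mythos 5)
Your proof has a genuine gap, and it sits exactly at the step you flag as the ``main obstacle'' --- but the situation is worse than you suggest: under your reading of the hypothesis, the claim you are trying to prove is false, so neither the exchange argument nor the LP-duality certificate you hope for can close it. In your third step you read ``the current LP problem'' as the split LP and argue that non-bindingness of $\sum_{r\in A}x_r\le 1$ there forces $V_{\parallel}\le V_{\mid}$. Consider the single pattern $a//b/c$ (so $x=a$, $y=b$, and $y$ has one child $c$) with no relations. The split branch produces the trees $\{a\}$ and $b/c$ with \emph{no} compensation inequalities, because every root-to-leaf path contains $x=a$; the split LP is $\{x_a\le 1,\ x_b+x_c\le 1\}$ with optimum $2$, and adding the hypothesised inequality $x_a+x_b\le 1$ leaves the optimum at $2$ (take $x_a=x_c=1$, $x_b=0$). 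Yet the conversion LP $\{x_a+x_b+x_c\le 1\}$ has optimum $1$, while the true bound is $\Theta(N^2)$ (a chain of $N$ nodes labelled $a$ whose bottom node has $N$ children labelled $b$, each with one child labelled $c$), so the split branch cannot be cancelled here. The same computation shows the hypothesis also ``holds'' if one reads the current LP as the conversion LP. Hence, for \autoref{lemma:optimization2} to be sound at all, ``the current LP problem'' must be an LP on which the test inequality \emph{does} change the optimum in this example (e.g.\ the constraints existing independently of $\alpha$ and the $y$-subtree, here the empty LP with optimum $3$), and your entire third step would have to be rebuilt around that reading; your mass-transfer intuition cannot be repaired in this instance since there are no sibling labels to receive the weight sitting below $y$.

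There is a second, independent flaw in your first step: the paired conversion/split suites do not ``share every other constraint literally.'' If another descendant axis $\beta$ in a sibling branch is split after $\alpha$ has been handled, its compensation inequalities are built from root-to-leaf paths of the then-current tree: in the conversion branch those paths run through $x$ and $y$ down to the leaves of the $y$-subtree, while in the split branch they stop at $x$. The conversion-side suite therefore contains the constraint over $P_{\beta}\cup P\cup S_j$, which is \emph{not} implied by the split-side constraints over $P_{\beta}\cup P$ and $P\cup S_j$ taken separately, so your inclusion $(\text{split}+\text{coupled})\subseteq(\text{conversion})$ fails even in the case where $y$ is a leaf. Finally, for comparison: the paper states \autoref{lemma:optimization2} without any proof (its appendix proves only \autoref{theo:np-hard} and its auxiliary lemmas), so there is no argument there to measure yours against; but as written, your proposal does not establish the lemma.
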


 Recall \autoref{fig:new_example_mutation}.  We can safely avoid the computation for the suites (VIII), (X), (XII) and (XIV) based on \autoref{lemma:optimization2}. Combining the rule in Lemma 1 together, we  compute only 3 suites out of a total of 8, significantly reducing the search space.

One might wonder why the second optimization  would not be more aggressive to apply on all descendant axis. This is because this aggressive strategy cannot be combined with Optimization 1.  For example, consider a path pattern ``$a//b//c$". If we perform a conversion operation between $a$ and $b$, then based on Optimization 1, we cannot do the split between $b$ and $c$, leading to a suboptimal solution. Therefore,  in Optimization 2, we  consider a conservative strategy for only leaf descendant axis.

\section{Approach}\label{sec:approach}

In this section, we tackle the challenges in designing a worst-case optimal algorithm for CMCQs over relational and tree data. We briefly review the existing relational worst-case optimal join algorithms. We represent these results informally and refer the readers to Ngo et al. \cite{DBLP:journals/sigmod/NgoRR13} for a complete survey. The first algorithm to have a running time matching these worst-case size bounds is the NPRR algorithm \cite{DBLP:journals/corr/abs-1203-1952}. An important property in NPRR is to estimate the intermediate join size and avoid to produce the case which is larger than the worst-case bound. In fact, for any join query, its execution time can be upper bounded by the AGM  \cite{DBLP:conf/focs/AtseriasGM08}. Interestingly, \textit{LeapFrog} \cite{veldhuizen2012leapfrog} and \textit{Joen}  \cite{ciucanu2015worst} completely abandon the ``query plan'' and propose to deal with one attribute at a time with multiple relations at the same time. 

\begin{figure}[t!]
     \centering
    \small
     \begin{subfigure}[t]{0.28\textwidth}
        \centering
        
\begin{tikzpicture}[font=\small,
	level distance=0.6cm,
  level 1/.style={sibling distance=0.45cm},
  level 2/.style={sibling distance=0.45cm},
]
\useasboundingbox [fill=gray!0] (0,0) rectangle (22mm,20mm);

\node[black] at (8mm,16mm)  {$R_0$($r_a$,$r_b$,$r_c$)};

\draw  [<-, thick] (19mm,16mm) -- (22mm,16mm);

\node[black] at (4mm,11mm)  {$a$}
    child {node {$b$}}
    child {node {$c$}};

\node[black] at (9mm,7mm)  {$\bowtie$};
\node[black] at (17mm,7mm)  {$R_1$($r_b$,$r_c$)};
\end{tikzpicture}
        \caption{Tree query $Q$}
        \label{fig:alg_jo_query}
\end{subfigure}%
\begin{subfigure}[t]{0.22\textwidth}
        \centering
        
\begin{tikzpicture}[font=\small,
	level distance=0.6cm,
  level 1/.style={sibling distance=0.35cm},
  level 2/.style={sibling distance=0.35cm},
]
\useasboundingbox [fill=gray!0] (0,0) rectangle (10mm,20mm); 

\node[black] at (5mm,10mm) {
\setlength{\tabcolsep}{2pt}
\begin{tabular}{cc}
\hline
\textbf{$r_b$} & \textbf{$r_c$}\\ \hline
$b_0$     & $c_0$           \\ \hline
$b_0$       & $c_1$         \\ \hline  
$b_1$      & $c_0$         \\ \hline
$b_1$       & $c_1$        \\ \hline 
\end{tabular}
\medskip
};

\end{tikzpicture}
        \caption{Table $D_{R_1}$}
        \label{fig:alg_jo_table_instance}
\end{subfigure}%
\begin{subfigure}[t]{0.28\textwidth}
        \centering
        
\begin{tikzpicture}[font=\small,
	level distance=0.5cm,
  level 1/.style={sibling distance=2cm},
  level 2/.style={level distance=0.5cm, sibling distance=0.55cm}
]
\useasboundingbox [fill=gray!0] (0,0) rectangle (40mm,20mm); 

\node[black] at (20mm,18mm)  {$a_0$}
    child {node {$p_0$}
        child {node {$b_0$} child {node {$p_1$}}}
        child {node {$b_1$} child {node {$p_2$}}}
        child {node {$b_2$} child {node {$p_3$}}}
        child {node {$b_3$} child {node {$p_4$}}}
        child {node {$c_0$} child {node {$p_5$}}}
        child {node {$c_1$} child {node {$p_6$}}}
        child {node {$c_2$} child {node {$p_7$}}}
        child {node {$c_3$} child {node {$p_8$}}}}
    ;

\end{tikzpicture}
        \caption{Encode Tree $D_T$}
        \label{fig:alg_jo_tree_instance}
\end{subfigure}%
     \caption{A CMCQ (a) and its table instance (b) and tree instance (c).}
     \label{fig:alg_cross_model_query_example}
\end{figure}
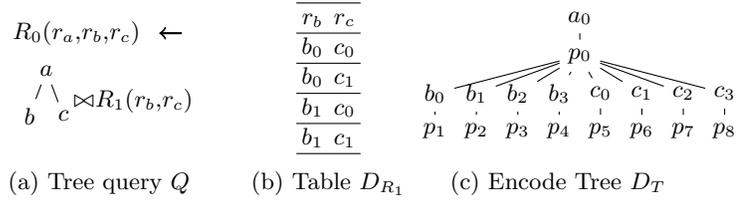

\subsection{Tree and relational data representation} 

To answer a tree pattern query,  a positional representation of occurrences of tree elements and string values in the tree database are widely used, which extends the classic inverted index data structure in information retrieval. There existed two common ways to encode an instance tree, i.e. Dewey encoding~\cite{DBLP:conf/vldb/LuLCC05} and containment encoding \cite{DBLP:conf/sigmod/BrunoKS02}. These decodings are necessary as they allow us to partially join tree patterns to avoid undesired intermediate result. After encoding, each attribute $j$ in the query node can be represented as a \textbf{node table} in form of $t_j(r_j, p_j)$, where $r_j$ and $p_j$ are the label value and position value, respectively. Check an example from a encoded tree instance in \autoref{fig:alg_jo_tree_instance}. The position value can be added in $\mathcal{O}(N)$ by one scan of the original tree. Note that we use Dewey coding in our implementation but the following algorithm is not limited to such representation. Any representation scheme which captures the structure of trees such as a region encoding \cite{DBLP:conf/sigmod/BrunoKS02} and an extended Dewey encoding \cite{DBLP:conf/vldb/LuLCC05} can all be applied in the algorithm.

All the data in relational are label data, and all relation tables and node tables will be expressed by the Trie index structure, which is commonly applied in the relational worst-case optimal algorithms (e.g. \cite{DBLP:conf/sigmod/AbergerTOR16,veldhuizen2012leapfrog}).
The Trie structure can be accomplished using standard data structures (notably, balanced trees with $\mathcal{O}(\log n)$ look-up time or nested hashed tables with $\mathcal{O}(1)$ look-up time).

\subsection{Challenges}

In our context, tree data and twig pattern matching do make situation more complex. Firstly, directly materializing tree pattern matching may yield asymptotically more intermediate results. If we ignore the pattern, we may loose some bound constraints. Secondly, since tree data are representing both label and position values, position value joining may require more computation cost for pattern matching while we do not need position values in our final result.

\begin{example}\label{example:worst-case-motivation}
Recall that a triangle relational join query $Q$=$R_1(r_a, r_b) \bowtie R_2(r_a, r_c) $ $\bowtie R_3(r_b, r_c)$ has size bound $\mathcal{O}(N^{\frac{3}{2}})$. \autoref{fig:alg_cross_model_query_example} depicts an example of a CMCQ $Q$ with the table $R_1(r_b, r_c)$ and twig query $a[b]/c$ to return result $R_0(r_a, r_b, r_c)$, which also has size bound $\mathcal{O}(N^{\frac{3}{2}})$ since the PC paths $a/b$ and $a/c$ are equivalent to the constraints $x_a + x_b \leq 1$ and $x_a + x_c \leq 1$, respectively. \autoref{fig:alg_jo_table_instance} and \autoref{fig:alg_jo_tree_instance} show the instance table $D_{R_1}$ and the encoded tree $D_T$. The number of label values in the result $R_0(r_a,r_b,r_c)$ is only $4$ rows which is $\mathcal{O}(N)$. On the other hand, the result size of only the tree pattern is $16$ rows which is $\mathcal{O}(N^2)$, where $N$ is a table size or a node size for each attribute. The final result with the position values is also $\mathcal{O}(N^2)$. Here, $\mathcal{O}(N^2)$ is from the matching result of the position values of the attributes $t_b$ and $t_c$.
\end{example}

EmptyHeaded \cite{DBLP:conf/sigmod/AbergerTOR16} applied the existing worst-case optimal algorithms to process the graph edge pattern matching. We may also attempt to solve relation-tree joins by representing the trees as relations with the node-position and the node-label tables and then reformulating the cross-model conjunctive query as a relational conjunctive query. However, as \autoref{example:worst-case-motivation} illustrated, such method can not guarantee the worst-case optimality as extra computation is required for position value matching in a tree. 

\subsection{Cross-model join (CMJoin) algorithm} 

In this part, we discuss the algorithm to process both relational and tree data. As the position values are excluded in the result set while being required for the tree pattern matching, our algorithm carefully deals with it during the join. We propose an efficient cross-model join algorithm called \textit{CMJoin} (cross-model join). In certain cases it guarantees the runtime optimality.
We discover the join result size under three scenarios: with all node position values, with only branch node position value, and without position value.

\begin{lemma}
Given relational tables $\mathcal{R}$ and pattern queries $\mathcal{T}$, let $S_r$, $S_p$, and $S_p'$ be the sets of all relation attributes, all position attributes, and only branch node position attributes, respectively. Then it holds that
\begin{equation}
    \rho_1(S_r\cup S_p) \geq \rho_2(S_r\cup S_p') \geq \rho_3(S_r).
\end{equation}
\end{lemma}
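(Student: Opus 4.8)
The plan is to read each $\rho_i(\cdot)$ as the exponent of the worst-case size of the join result \emph{projected onto the indicated attribute set}, and then to obtain the whole chain of inequalities from a single monotonicity principle: result size never decreases when output columns are added. The first thing I would record is the inclusion chain among the three schemas. Since the branch nodes are a subset of all tree nodes, their position attributes satisfy $S_p' \subseteq S_p$, and hence
\begin{equation*}
S_r \ \subseteq\ S_r \cup S_p' \ \subseteq\ S_r \cup S_p .
\end{equation*}
All three output schemas are thus nested, with $S_r\cup S_p$ the largest and $S_r$ the smallest; this is exactly the direction predicted by \autoref{example:worst-case-motivation}, where retaining position values inflates the result from $\mathcal{O}(N)$ to $\mathcal{O}(N^2)$.

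Next I would establish the key pointwise property. For a fixed admissible database instance $D$ (each relation of size at most $N$ and each label class of size at most $N$) and any attribute sets $S'\subseteq S$, the projected result obeys $|\pi_{S'}(\mathrm{Res}(D))| \le |\pi_{S}(\mathrm{Res}(D))|$. This is immediate from the identity $\pi_{S'} = \pi_{S'}\circ\pi_{S}$ together with the fact that a projection can only merge tuples, never create them. Applying this to the two inclusions above gives, instance by instance,
\begin{equation*}
|\pi_{S_r}(\mathrm{Res}(D))| \ \le\ |\pi_{S_r\cup S_p'}(\mathrm{Res}(D))| \ \le\ |\pi_{S_r\cup S_p}(\mathrm{Res}(D))| .
\end{equation*}

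I would then lift this to the worst case over all admissible instances. Writing $\mathrm{OUT}(S)$ for the maximum of $|\pi_S(\mathrm{Res}(D))|$ over such $D$, taking the maximum preserves the ordering: the instance maximizing a smaller-schema quantity can only do better on the larger schema, so $\mathrm{OUT}(S_r)\le \mathrm{OUT}(S_r\cup S_p')\le \mathrm{OUT}(S_r\cup S_p)$. Because each worst-case size is $\Theta(N^{\rho_i})$ and the ordering holds for every $N$, the corresponding exponents must be ordered in the same way, which is precisely $\rho_1(S_r\cup S_p)\ge \rho_2(S_r\cup S_p')\ge \rho_3(S_r)$.

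The step I expect to need the most care is the final transfer from an ordering of sizes to an ordering of exponents, and in particular confirming that the three $\rho_i$ really are the size-bound exponents of \emph{one} query under three projections rather than values coming from three unrelated relaxations. A more self-contained alternative would argue directly on the linear programs of \eqref{eq:lp_slution1}: the three settings share a single feasible region, and the objective for a smaller schema is obtained from that of a larger one by deleting nonnegative terms $x_r$; since $x\ge 0$ throughout the feasible region, an optimal solution for the smaller objective is feasible for the larger with objective value at least as large, so the optimum is monotone in the schema. I would keep the combinatorial projection argument as the primary proof, as it does not depend on the exact form of the relational, PC-path, and compensation constraints, and invoke the LP view only as corroboration.
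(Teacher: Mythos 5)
Your proposal is correct and follows essentially the same route as the paper's own proof: the paper also observes that $Q(S_r)$ is the projection of $Q(S_r\cup S_p')$ and that $Q(S_r\cup S_p')$ is the projection of $Q(S_r\cup S_p)$, and concludes the ordering of the exponents from the fact that projection can only shrink the result. Your write-up merely makes explicit the instance-by-instance inequality, the passage to the worst case, and the transfer from sizes to exponents, which the paper leaves implicit.
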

\begin{proof}
$Q(S_r)$ is the projection result from $Q(S_r\cup S_p')$ by removing all position values, and $Q(S_r\cup S_p')$ is the projection result from $Q(S_r\cup S_p)$ by removing non-branch position values. Therefore, the result size holds $\rho_1(S_r\cup S_p) \geq \rho_2(S_r\cup S_p') \geq \rho_3(S_r)$.
\end{proof}

\begin{example}
Recall the CMCQ $Q$ in \autoref{fig:alg_jo_query}, which is $Q$=$R_1(r_b, r_c) \bowtie a[b]/c$. Nodes $a$, $b$, $c$ in the tree pattern can be represented as node tables $(r_a,p_a)$, $(r_b,p_b)$, and $(r_c,p_c)$, respectively. So we have $Q(S_r\cup S_p) = R(r_a,r_b,r_c,p_a,p_b,p_c)$, $Q(S_r\cup S_p') = R(r_a,r_b,r_c,p_a)$, and $Q(S_r) = R(r_a,r_b,r_c)$. By the LP constraint bound for the relations and PC-paths, we achieve $\mathcal{O}(N^2)$, $\mathcal{O}(N^\frac{3}{2})$, and $\mathcal{O}(N^\frac{3}{2})$ for the size bounds $\rho_1(S_r\cup S_p)$, $\rho_2(S_r\cup S_p')$, and $\rho_3(S_r)$, respectively. 
\end{example}

We elaborate \textit{CMJoin} \autoref{alg:cmjoin} more in the following. In the case of $\rho_1(S_r\cup S_p)$=$\rho_3(S_r)$, \textit{CMJoin} executes a generic relational worst-case optimal join algorithm~\cite{DBLP:conf/sigmod/AbergerTOR16,ciucanu2015worst} as the extra position values do not affect the worst-case final result. In other cases, \textit{CMJoin} computes the path result of the tree pattern first. In this case, we project out all position values of a non-branch node for the query tree pattern. Then, we keep the position values of the only branch node so that we still can match the whole part of the tree pattern. 

\begin{algorithm} \caption{\textit{CMJoin}}
\label{alg:cmjoin}
    \KwIn{Relational tables $\mathcal{R}$, pattern queries $\mathcal{T}$}
    \DontPrintSemicolon
    $\mathcal{R}'\gets \emptyset$\tcp*{Tree intermediate result}
    
    \eIf(\tcp*[h]{\autoref{theo:optimal_algorithm1} condition (1)}){$\rho_{1}(S_r\cup S_p) \leq \rho_{3}(S_r)$}{
       \ForEach{ $ N \in \mathcal{T}$}{ 
        $\mathcal{R}' \gets \mathcal{R}' \cup C_N(r_N, p_N)$  \tcp*{Nodes as tables}
        }
       }{
    $\mathcal{P} \gets \mathcal{T}.getPaths()$\\
    \ForEach{ $ P \in \mathcal{P}$}{ 
    $R_P(S_r\cup S_p) \gets $ path result of $P$  \tcp*{Paths as tables}
    $R_P'(S_r\cup S_p') \gets$ project out non-branch position values of $R_P(S_r\cup S_p)$\\
    $\mathcal{R}' \gets \mathcal{R}' \cup \{R_P'(S_r\cup S_p')\}$
    }
    }

    $Q(S_r\cup S_p') \gets generic\_join(\mathcal{R} \cup \mathcal{R}')$
    
    $Q(S_r) \gets$ project out all position values $Q(S_r\cup S_p')$\\
    \KwOut{Join results $Q(S_r)$}
    
\end{algorithm}

\begin{theorem}\label{theo:optimal_algorithm1}
Assume we have relations $\mathcal{R}$ and pattern queries $\mathcal{T}$. If either
\begin{enumerate}
    \item[(1)] $\rho_1(S_r\cup S_p) \leq \rho_{3}(S_r)$ or
    \item[(2)] (i) $\rho_1(S_r\cup S_p') \leq \rho_{3}(S_r)$ and (ii) for each path $P$ in $\mathcal{T}$ let $S_r''$ and $S_p''$ be the set of label and position attributes for $P$ so that $\rho_4(S_r''\cup S_p'') \leq \rho_{3}(S_r)$.
\end{enumerate}
Then, \textit{CMJoin} is worst-case optimal to $\rho_{3}(S_r)$.
\end{theorem}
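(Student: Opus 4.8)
The plan is to bound the running time of \textit{CMJoin} by $\mathcal{O}(N^{\rho_3(S_r)})$ in both branches of the theorem, since $N^{\rho_3(S_r)}$ is the worst-case size of the final output $Q(S_r)$ and so matching it is exactly the statement of worst-case optimality to $\rho_3(S_r)$. The one external tool I would invoke is the guarantee of the generic worst-case optimal join cited above (NPRR / \textit{LeapFrog} / EmptyHeaded): run over a collection of Trie-indexed relations whose attribute set is $S$, it terminates in $\mathcal{O}(N^{\rho(S)})$ time, up to logarithmic look-up factors, where $\rho(S)$ is the AGM/LP bound of that collection. I would also use the Lemma preceding this theorem, $\rho_1(S_r\cup S_p)\ge\rho_2(S_r\cup S_p')\ge\rho_3(S_r)$, together with the fact that the number of tree nodes and of root-to-leaf paths is bounded by the query size, hence constant in data complexity.

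First I would dispatch branch (1). Here \textit{CMJoin} materializes each node of $\mathcal{T}$ as a table $C_N(r_N,p_N)$ and runs a single generic join over $\mathcal{R}\cup\mathcal{R}'$, whose attribute set is precisely $S_r\cup S_p$. By the generic-join guarantee this costs $\mathcal{O}(N^{\rho_1(S_r\cup S_p)})$. Condition (1) gives $\rho_1(S_r\cup S_p)\le\rho_3(S_r)$, while the Lemma gives the reverse inequality, so $\rho_1(S_r\cup S_p)=\rho_3(S_r)$; the concluding projection onto $S_r$ only shrinks the table, so the entire run is $\mathcal{O}(N^{\rho_3(S_r)})$, which is optimal.

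Next I would handle branch (2) by accounting for each stage of \autoref{alg:cmjoin} separately. (a) For each root-to-leaf path $P$, computing $R_P(S_r''\cup S_p'')$ is itself a generic join over the node tables of $P$, costing $\mathcal{O}(N^{\rho_4(S_r''\cup S_p'')})$; condition (2)(ii) caps each such term at $\mathcal{O}(N^{\rho_3(S_r)})$, and summing over the constantly many paths preserves this bound. (b) Projecting out non-branch positions to form $R_P'(S_r\cup S_p')$ is linear in $|R_P|$, hence again $\mathcal{O}(N^{\rho_3(S_r)})$. (c) The closing generic join over $\mathcal{R}\cup\{R_P'\}$ yields $Q(S_r\cup S_p')$ in time $\mathcal{O}(N^{\rho_1(S_r\cup S_p')})$, which condition (2)(i) bounds by $\mathcal{O}(N^{\rho_3(S_r)})$. (d) The final projection onto $S_r$ is linear in $|Q(S_r\cup S_p')|\le\mathcal{O}(N^{\rho_3(S_r)})$. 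A constant number of stages, each $\mathcal{O}(N^{\rho_3(S_r)})$, sums to $\mathcal{O}(N^{\rho_3(S_r)})$, matching the output bound.

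The step I expect to be the main obstacle is the correctness and bounding of the path decomposition in branch (2): I must argue that replacing the per-node tables by the per-path tables $R_P'$, retaining only the branch-node positions, still computes the complete twig match, and that the bound of the resulting generic join is exactly the $\rho_1(S_r\cup S_p')$ that condition (2)(i) controls rather than something larger. Correctness hinges on the observation that two paths can be glued into a full tree match only where they share a branch node, so the shared branch positions are precisely the information that must survive the projection, whereas interior non-branch positions are local to a single path and may be safely discarded; I would make this explicit by showing that joining the $R_P'$ over their common branch attributes equals the projection of the full tree-pattern relation onto $S_r\cup S_p'$. Verifying that this path-based instance realizes the bound $\rho_1(S_r\cup S_p')$, given that the intermediate path tables may themselves exceed size $N$, is where the accounting needs the most care, and is the crux on which the whole optimality argument rests.
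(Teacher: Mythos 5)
Your proposal is correct and takes essentially the same approach as the paper's proof: case (1) is handled by computing the full join including position attributes and then projecting, and case (2) by computing per-path results (bounded via condition (2)(ii)), projecting down to branch-node positions, and running a generic worst-case optimal join over $\mathcal{R}\cup\mathcal{R}'$ (bounded via condition (2)(i)) before the final projection. The two issues you flag as the crux --- correctness of gluing path tables at their shared branch nodes, and the accounting when intermediate path tables exceed size $N$ --- are likewise left implicit in the paper's own (terser) proof, so your sketch is, if anything, more explicit than the original.
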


\begin{proof}
\textit{(1)} Since the join result of the only label value $\rho_{3}(S_r)$  is the projection of $\rho_1(S_r\cup S_p)$, we can compute $Q(S_r\cup S_p)$ first, then project out all the position value in linear of $\mathcal{O}(N^{\rho_1(S_r\cup S_p)})$. Since $\rho_1(S_r\cup S_p) \leq \rho_{3}(S_r)$, we can estimate that the result size is limited by $\mathcal{O}(N^{\rho_{3}(S_r)})$.

\textit{(2)} $\rho_1(S_r''\cup S_p'') \leq \rho_{3}(S_r)$ means that each path result with label and position values are under worst-case result of $\rho_{3}(S_r)$. We may first compute the path result and then project out all the non-branch position values. The inequality $\rho_2(r,p') \leq \rho_3(r)$ means that the join result containing all branch position values has a worst-case result size which is still under $\rho_{3}(S_r)$. Then by considering those position values as relational attribute values and by a generic relation join \cite{DBLP:journals/corr/abs-1203-1952,DBLP:conf/sigmod/AbergerTOR16}, 
\textit{CMJoin} is worst-case optimal to $\rho_{3}(S_r)$.
\end{proof}

\begin{example}
Recall the CMCQ query $Q = R_1(r_b, r_c) \bowtie a[b]/c$ in \autoref{fig:alg_jo_query}. Since $\rho_1(S_r\cup S_p)>\rho_3(r)$, directly computing all label and position values may generate asymptotically bigger result ($\mathcal{O}(N^2)$ in this case). So we can compute path results of $a/b$ and $a/c$, which are $(r_a,p_a,r_b,p_b)$ and $(r_a,p_a,r_c,p_c)$ and in $\mathcal{O}(N)$. Then we obtain only branch node results $(r_a,p_a,r_b)$ and $(r_a,p_a,r_c)$. By joining these project-out result with relation $R_1$ by a generic worst-case optimal algorithm, we can guarantee the size bound is $\mathcal{O}(N^\frac{3}{2})$.
\end{example}

\begin{table}[ht!]
    \footnotesize
    \scriptsize
    \centering
    \caption{Intermediate result size ($10^6$) and running time (S) for queries. ``/'' and ``-'' indicate ``timeout'' ($\geq$ 10 mins) and ``out of memory''. We measure the intermediate size by accumulating all intermediate and final join results.}
    \label{tab:intermediate_result}

\setlength{\tabcolsep}{1pt}
\begin{tabular}{c|ccccc|ccccc}

\toprule
 & \multicolumn{5}{c|}{Intermediate result size($10^6$)} & \multicolumn{5}{c}{Running time (second)} \\ \hline
Query & \textbf{PG} & \textbf{SJ} & \textbf{VJ} & \textbf{EH} & \textbf{\textit{CMJoin}}  & \textbf{PG} & \textbf{SJ} & \textbf{VJ} & \textbf{EH} & \textbf{\textit{CMJoin}} \\ \hline
Q1 & 7.87x & 2.60x & 2.00x & 1.68x & 0.15 & 18.02x & 1.39x & 1.51x & 1.66x & 3.22 \\
Q2 & / & - & 3.75x & 4.83x & 0.08 & / & - & 4.52x & 129x & 1.96 \\
Q3 & 86.0x & 62.6x & 3.63x & 4.61x & 0.08  & 21.3x & 4.27x & 1.99x & 4.28x & 3.06 \\
Q4 & / & 1.96x & 1.75x & 1.64x & 0.24  & / & 2.34x & 2.63x & 1.82x & 3.55 \\
Q5 & / & - & 1.86x & 1.77x & 0.22  & / & - & 4.75x & 39.8x & 3.11 \\
Q6 & / & 2.24x & 2.00x & 1.85x & 0.21  & / & 6.10x & 3.30x & 2.89x & 3.00 \\ \hline
Q7 & 133x & 106x & - & 35.1x & 0.29 & 4.82x & 9.05x & - & 7.18x & 8.36 \\
Q8 & 350x & 279.8x & - & / & 0.11 & 4.36x & 5.61x & - & / & 13.8 \\
Q9 & 8.87x & 8.34x & - & 2.01x & 4.62 & 1.12x & 2.13x & - & 1.48x & 35.0 \\
Q10 & 110x & 440x & 4.86x & / & 0.07 & 2.91x & 12.7x & 1.22x & / & 5.62 \\
Q11 & 110x & 440x & 4.86x & / & 0.07 & 2.11x & 10.5x & 0.88x & / & 6.84 \\
Q12 & 110x & 440x & 4.86x & / & 0.07 & 2.68x & 9.99x & 1.06x & / & 7.25 \\ \hline
Q13 & 1.04x & 1.22x & 1.22x & 1.07x & 43.2 & 1.37x & 4.81x & 4.79x & 1.31x & 34.2 \\
Q14 & 19.7x & 2.56x & 2.56x & 3.90x & 0.39 & 2.04x & 3.82x & 3.79x & 2.14x & 2.73 \\
Q15 & 14.2x & 1.85x & 1.85x & 17.0x & 0.54 & 1.68x & 3.53x & 3.54x & 2.01x & 2.87 \\
Q16 & 1.24x & 1.24x & 6.81x & 2.15x & 0.37  & 2.88x & 1.32x & 1.96x & 1.02x & 12.9 \\
Q17 & 1.59x & 7.84x & 2.28x & 1.31x & 0.32  & 7.03x & 3.58x & 3.38x & 2.10x & 5.08 \\
Q18 & 1.59x & 7.13x & 1.59x & 1.64x & 0.32  & 14.1x & 5.21x & 5.02x & 2.06x & 2.98 \\
Q19 & / & 5.47x & 6.62x & 1.77x & 0.45 & / & 1.41x & 1.94x & 0.89x & 14.7 \\
Q20 & 7.80x & 25.1x & 7.30x & 4.19x & 0.10 & 12.1x & 6.77x & 6.39x & 3.17x & 3.37 \\
Q21 & 12.0x & 36.1x & 18.4x & 14.7x & 0.10 & 14.6x & 8.82x & 8.75x & 10.9x & 2.89 \\ \hline
Q22 & 1.00x & 18.5x & 18.5x & 0.96x & 0.57 & 1.16x & 5.22x & 4.31x & 2.35x & 12.7 \\
Q23 & 18.5x & 18.5x & 18.5x & 1.61x & 0.57 & 1.92x & 2.17x & 1.83x & 1.02x & 15.8 \\
Q24 & 14.3x & 3.02x & 4.02x & 0.96x & 0.57 & $>$9kx &$>$11kx & $>$12kx & 0.18x & 0.01 \\ \hline
AVG & 5.46x & 5.90x & 1.92x & 1.90x & 2.24 & 4.37x & 5.34x & 3.33x & 3.46x & 8.54 \\
\bottomrule
\end{tabular}
\end{table}

\section{Evaluation}\label{sec:evaluation}
In this section, we experimentally evaluate the performance of the proposed algorithms and \textit{CMJoin} with four real-life and benchmark data sets. We comprehensively evaluate \textit{CMJoin} against state-of-the-art systems and algorithms concerning efficiency, scalability, and intermediate cost.

\subsection{Evaluation setup}
\noindent \textbf{Datasets and query design} \indent \autoref{tab:dataset} provides the statistics of datasets and designed CMCQs. These diverse datasets differ from each other in terms of the tree structure, data skewness, data size, and data model varieties. Accordingly, we designed 24 CMCQs to evaluate the efficiency, scalability, and cost performance of the \textit{CMJoin} in various real-world scenarios. 

\noindent\textbf{Comparison systems and algorithms} \indent 
\textit{CMJoin} is compared with two types of state-of-the-art cross-model solutions. The first solution is to use one query to retrieve a result without changing the nature of models \cite{DBLP:conf/fnc/NassiriMH18,zhangunibench}. We implemented queries in PostgreSQL (\textit{PG}), that supports cross-model joins. This enables the usage of the \textit{PG}'s default query optimizer. 

The second solution is to encode and retrieve tree nodes in a relational engine \cite{DBLP:journals/jsw/BousalemC15,DBLP:journals/kbs/QtaishA16,DBLP:conf/ACISicis/ZhuYFS17,DBLP:conf/sigmod/AbergerTOR16}. We implemented two algorithms, i.e. structure join (\textit{SJ}) (pattern matching first, then matching the between values) and value join (\textit{VJ}) (label value matching first, then matching the position values). Also, we compared to a worst-case optimal relational engine called EmptyHeaded (\textit{EH}) \cite{DBLP:conf/sigmod/AbergerTOR16}.

\noindent \textbf{Experiment Setting} \indent
We conducted all experiments on a 64-bit Windows machine with a 4-core Intel i7-4790 CPU at 3.6GHz, 16GB RAM, and 500GB HDD. We implemented all solutions, including \textit{CMJoin} and the compared algorithms, in memory processing by Python 3. We measured the computation time of joining as the main metric excluding the time used for compilation, data loading, index presorting, and representation/index creation for all the systems and algorithms. We employed the Dewey encoding \cite{DBLP:conf/vldb/LuLCC05} in all experiments. The join order of attributes is greedily chosen based on the frequency of attributes. We measured the intermediate cost metric by accumulating all intermediate and final join results. For \textit{PG} we accumulated all sub-query intermediate results. 
We repeated five experiments excluding the lowest and the highest measure and calculated the average of the results. Between each measurement of queries we wiped caches and re-loaded the data to avoid intermediate results.

\begin{figure}[t]
    \centering
    \centering
\begin{tikzpicture}

\pgfplotstableread{
x y
1	494.4125671
2	1000
3	1000
4   1000
5   1000
6   1000
7	128.9668016
8	1000
9	77.19078016
10   1000
11   1000
12   1000
13	235.0922918
14	8.783295393
15	4.018759727
16	1000
17	1000
18	1000
19	1000
20	1000
21	1000
22  18.647146940231323
23  23.161616086959839
24  42.528913974761963
}{\cdb}

\pgfplotstableread{
x y
1	58.03966451
2	1000
3	65.11681032
4   111.46849179267883
5   1000
6   1000
7	40.32632518
8	60.33269191
9	39.12628984
10  16.333496809005737
11  14.419880867004395
12  19.42306113243103
13	46.8560257
14	5.562289
15	4.81264739
16	37.1566534
17	35.70338607
18	42.09003496
19  1000
20  40.87788724899292
21  42.18544340133667
22  14.743799686431885
23  30.32495856285095215
24  96.38151056383052
}{\pg}

\pgfplotstableread{
x y
1	4.494897366
2	1000
3	13.05675936
4   8.315558910369873
5   1000
6   18.29512310028076
7	75.62656379
8	77.68454003
9	74.57444024
10  71.37913632392883
11  71.89853930473328
12  72.4368953704834
13	164.737045
14	10.42244792
15	10.13814878
16	16.98677325
17	18.16654372
18	15.53062129
19  20.65147566795349
20  22.805636167526245
21  25.502328872680664
22  66.05639863014221
23  34.27667188644409
24  110.76280808448792
}{\nj}

\pgfplotstableread{
x y
1	4.848903418
2	8.859442949
3	6.100447655
4   9.328198432922363
5   14.762009859085083
6   9.902380228042603
7	1000
8	1000
9	1000
10  6.83014988899231
11  6.020454168319702
12  7.7103517055511475
13	163.9085462
14	10.35088587
15	10.1486249
16	25.29368258
17	17.14656782
18	14.96762347
19  28.544870853424072
20  21.53696608543396
21  25.296306610107422
22  54.58181047439575
23  28.9535551071167
24  121.76718211174011
}{\vj}

\pgfplotstableread{
x y
1	5.349941968917847
2	252.5817449092865
3	13.087437391281128
4   6.448789834976196
5   123.79373908042908
6   8.667698860168457
7	60.002097368240356
8   971.423177242279
9	52.00543999671936
10  971.423177242279
11  971.423177242279
12  971.423177242279
13	44.85323095321655
14	5.8465353012085
15	5.76629590988159
16	13.086905241012573
17	10.690909624099731
18	6.135972023010254
19  13.086905241012573
20  10.690909624099731
21  31.49009609222412
22  29.792332887649536
23  16.163808345794678
24  0.01/
}{\eh}

\pgfplotstableread{
x y
1	3.221966314
2	1.957942963
3	3.061967659
4   3.5455145835876465
5   3.107366371154785
6   3.0015110969543457
7	8.356827736
8	13.85473633
9	35.03611684
10  5.624425506591797
11  6.841535568237305
12  7.245246648788452
13	34.24016118
14	2.730003595
15	2.87126708
16	12.88750839
17	5.07830677
18	2.981938601
19  14.688944101333618
20  3.370041608810425
21  2.894787216186523
22  12.663344383239746
23  15.796029567718506
24  0.0009980201721191406
}{\xjoin}

\begin{groupplot}[
        width=13cm, 
		height=3cm, 
		label style={font=\tiny},
 		ticklabel style={font=\tiny},
        group style={group name=plots,group size=1 by 2, horizontal sep=2em, vertical sep=2em}]
 \nextgroupplot[
        ybar,
        ybar=.001cm,
        bar width=0.11cm,
        xtick={1,...,24},
        ylabel=Running~~~time (s),
        ylabel style={font=\small},
        xticklabels={Q1,Q2,Q3,Q4,Q5,Q6,Q7,Q8,Q9,Q10,Q11,Q12,Q13,Q14,Q15,Q16,Q17,Q18,Q19,Q20,Q21,Q22,Q23,Q24},
       legend columns=-1,
       legend style={at={(0.22,1)},anchor=south west,draw=none,fill=none,font=\small},
      ylabel style={at={(-0.75cm,-0.2cm)},anchor=north,font=\small},
         legend image code/.code={%
                    \draw[#1, draw=black] (0cm,-0.1cm) rectangle (0.5cm,0.2cm);
                },
        area legend,
       xmin=0.5,
       xmax=12.5,
      ymax = 80,
      ymin=0,
]

         
        
         
         \addplot[ybar, pattern color=cyan!150, pattern=crosshatch dots] table[
           x expr=\thisrowno{0}, 
           y expr=\thisrowno{1}]{\pg};\addlegendentry{\textit{PG}};
        \addplot[ybar, pattern color=blue!150, pattern=north east lines, ] table[
           x expr=\thisrowno{0}, 
           y expr=\thisrowno{1}]{\nj};\addlegendentry{\textit{SJ}};
        \addplot[ybar, pattern color=red!150, pattern=grid] table[
           x expr=\thisrowno{0}, 
           y expr=\thisrowno{1}]{\vj};\addlegendentry{\textit{VJ}};
          
        \addplot[ybar, pattern color=green!150, pattern=north west lines] table[
           x expr=\thisrowno{0}, 
           y expr=\thisrowno{1}]{\eh};\addlegendentry{\textit{EH}};
        \addplot[ybar,fill=black!20] table[
           x expr=\thisrowno{0}, 
           y expr=\thisrowno{1}]{\xjoin};\addlegendentry{\textit{CMJoin}};

 \nextgroupplot[
        ybar,
        ybar=.001cm,
        bar width=0.11cm,
        xtick={1,...,24},
        ylabel style={font=\small},
        xticklabels={Q1,Q2,Q3,Q4,Q5,Q6,Q7,Q8,Q9,Q10,Q11,Q12,Q13,Q14,Q15,Q16,Q17,Q18,Q19,Q20,Q21,Q22,Q23,Q24},
       legend columns=-1,
       legend style={at={(0.52,1)},anchor=south west,draw=none,fill=none,font=\small},
         legend image code/.code={%
                    \draw[#1, draw=black] (0cm,-0.1cm) rectangle (0.5cm,0.2cm);
                },
        area legend,
       xmin=12.5,
       xmax=24.5,
      ymax = 80,
      ymin=0,
]
        
         
         \addplot[ybar, pattern color=cyan!150, pattern=crosshatch dots] table[
           x expr=\thisrowno{0}, 
           y expr=\thisrowno{1}]{\pg};
        \addplot[ybar, pattern color=blue!150, pattern=north east lines, ] table[
           x expr=\thisrowno{0}, 
           y expr=\thisrowno{1}]{\nj};
        \addplot[ybar, pattern color=red!150, pattern=grid] table[
           x expr=\thisrowno{0}, 
           y expr=\thisrowno{1}]{\vj};
          
        \addplot[ybar, pattern color=green!150, pattern=north west lines] table[
           x expr=\thisrowno{0}, 
           y expr=\thisrowno{1}]{\eh};
        \addplot[ybar,fill=black!20] table[
           x expr=\thisrowno{0}, 
           y expr=\thisrowno{1}]{\xjoin};

    \end{groupplot}

 
\end{tikzpicture}
    \caption{\textbf{Efficiency:} runtime performance for all queries by $PG$, $SJ$, $VJ$, $EH$, and $CMJoin$. The performance time of more than 80s is cut for better presentation.} 
    \label{fig:evaluation_efficiency}
\end{figure}


\noindent \textbf{Efficiency} \indent \autoref{fig:evaluation_efficiency} shows the evaluation of the efficiency. In general, \textit{CMJoin} is $3.33$-$13.43$ times faster in average than other solutions as shown in \autoref{tab:intermediate_result}. These numbers are conservative as we exclude the ``out of memory'' (OOM) and ``time out'' (TO) results from the average calculation. Algorithms \textit{SJ}, \textit{VJ}, and \textit{EH} perform relatively better compared to \textit{PG} in the majority of the cases as they encoded the tree data into relation-like formats, making it faster to retrieve the tree nodes and match twig patterns. 

Specifically in queries $Q1$-$Q6$, \textit{CMJoin}, \textit{SJ}, \textit{VJ}, and \textit{EH} perform better than \textit{PG}, as the original tree is deeply recursive in the TreeBank dataset \cite{xue2005penn_treebank}, and designed tree pattern queries are complex. So, it is costly to retrieve results directly from the original tree by \textit{PG}. Instead, \textit{CMJoin}, \textit{SJ}, and \textit{VJ} use encoded structural information to excel in retrieving nodes and matching tree patterns in such cases. 
In $Q2$ and $Q5$, \textit{EH} performs worse. The reason is that it seeks for a better instance bound by joining partial tables and sub-twigs first and then aggregates the result. However, the separated joins yield more intermediate result in such cases in this dataset.
In $Q1$ and $Q4$, which deal with a single table, \textit{SJ} and \textit{VJ} perform relatively close as no table joining occurs in these cases. However, in $Q2$-$Q3$ and $Q5$-$Q6$, \textit{SJ} performs worse as joining two tables first leads to huge intermediate results in this dataset. 

In contrast to the above, \textit{SJ} outperforms \textit{VJ} in $Q7$-$Q9$. The reason is that in the Xmark dataset \cite{DBLP:conf/vldb/SchmidtWKCMB02}, the tree data are flat and with less matching results in twig queries. The data in tables are also less skew. Therefore, \textit{SJ} operates table joins and twig matching separately yielding relatively low results. Instead, \textit{VJ} considers tree pattern matching later yielding too many intermediate results (see details of $Q7$ in \autoref{fig:evaluation_scalability} and \autoref{fig:evaluation_intermediate}) when joining label values between two models with non-uniform data. \textit{PG}, which implements queries in a similar way of \textit{SJ}, performs satisfactorily as well. The above comparisons show that compared solutions, which can achieve superiority only in some cases, and can not adapt well to dataset dynamics. 

Queries $Q10$-$Q12$ have more complex tree pattern nodes involved. In these cases \textit{VJ} filters more values and produces less intermediate results. Thus it outperforms \textit{SJ} ($\sim$10x) and \textit{PG} ($\sim$2x). For queries $Q10$-$Q12$ \textit{EH} also yields huge intermediate results with more connections in attributes. The comparison between $Q7$-$Q9$ and $Q10$-$Q12$ indicates that the solutions can not adapt well to query dynamics.

Considering queries $Q13$-$Q15$ and $Q22$-$Q24$, \textit{PG} performs relatively well since it involves only JSON and relational data. \textit{PG} performs well in JSON retrieving because JSON documents have a simple structure. In $Q14$-$Q15$ most of the solutions perform reasonably well when the result size is small but \textit{SJ}, and \textit{VJ} still suffer from a large result size in $Q13$. With only JSON data, \textit{SJ} and \textit{VJ} perform similarly, as they both treat a simple JSON tree as one relation. In contrast in $Q16$-$Q21$, it involves XML, JSON and relational data from the UniBench dataset \cite{zhangunibench}. \textit{CMJoin}, \textit{SJ}, \textit{VJ}, and \textit{EH} perform better than \textit{PG}. This is again because employing the encoding technique in trees accelerates node retrieval and matching tree patterns. Also, \textit{CMJoin}, \textit{SJ}, \textit{VJ}, and \textit{EH} are able to treat all the data models together instead of achieving results separately from each model by queries in \textit{PG}. 

Though compared systems and algorithms possess their advantages of processing and matching data, they straightforwardly join without bounding intermediate results, thus achieving sub-optimal performance during joining. \textit{CMJoin} is the clear winner against other solutions, as it can wisely join between models and between data to avoid unnecessary quadratic intermediate results. 

\begin{figure}[ht]
    \centering
    \input{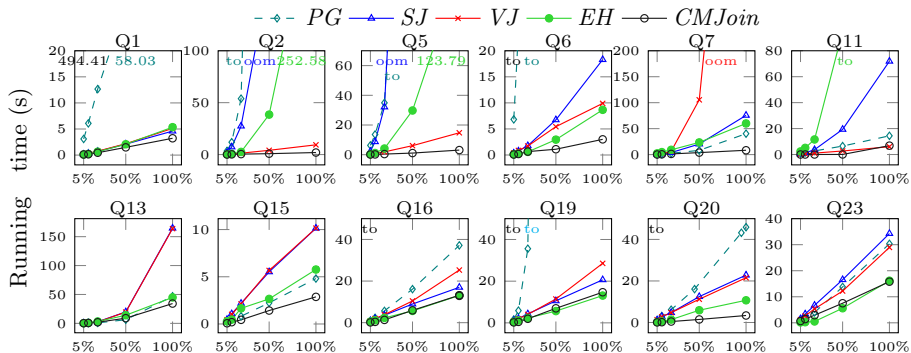}
    \caption{\textbf{Scalability:} runtime performance of \textit{PG}, \textit{SJ}, \textit{VJ}, \textit{EH}, and \textit{CMJoin}. The x-axis is the percentage of data size. 
    ``oom'' and ``to'' stand for ``out of memory'' error and ``timeout'' ($\geq$ 10 mins), respectively.}
    \label{fig:evaluation_scalability}
\end{figure}

\noindent \textbf{Scalability} \indent \autoref{fig:evaluation_scalability} shows the scalability evaluation. In most queries, \textit{CMJoin} performs flatter scaling as data size increases because \textit{CMJoin} is designed to control the unnecessary intermediate output. 

As discussed, \textit{CMJoin}, \textit{SJ}, \textit{VJ}, and \textit{EH} outperform \textit{PG} in most of the queries, as the encoding method of the algorithms speeds up the twig pattern matching especially when the documents or queries are complex. However, \textit{PG} scales better when involving simpler documents (e.g. in $Q15$ and $Q23$) or simpler queries (e.g. in $Q7$). Comparing to processing XML tree pattern queries, \textit{PG} processes JSON data more efficiently.

Interestingly in $Q2$, \textit{SJ} and \textit{PG} join two relational tables separately from twig matching, generating quadratic intermediate results, thus leading to the OOM and TO, respectively. In $Q7$ \textit{VJ} joins tables with node values without considering tree pattern structural matching and outputs an unwanted non-linear increase of intermediate results, thus leading to OOM in larger data size. Likewise evaluating \textit{EH} between $Q2$ and $Q7$, it can not adapt well with different datasets.
Performing differently in diverse datasets between \textit{SJ}/\textit{PG} and \textit{VJ}/\textit{EH} indicates that they can not smartly adapt to dataset dynamics. While increasing twig queries in $Q11$ compared to $Q7$, \textit{VJ} filters more results and thus decreases the join cost and time in $Q11$. The comparison between \textit{SJ}/\textit{EH} and \textit{VJ} shows dramatically different performance in the same dataset with different queries that indicates they can not smartly adapt to query dynamics.

In $Q11$, both \textit{CMJoin} and \textit{VJ} perform efficiently as they can filter out most of the values at the beginning. 
In this case, \textit{CMJoin} runs slightly slower than \textit{VJ}, which is reasonable as \textit{CMJoin} maintains a tree structure whereas \textit{VJ} keeps only tuple results. Overall, \textit{CMJoin} judiciously joins between models and controls unwanted massive intermediate results. The evaluation shows that it performs efficiently and stably in dynamical datasets, with various queries and it also scales well.

\begin{figure}[t]
    \centering
    \input{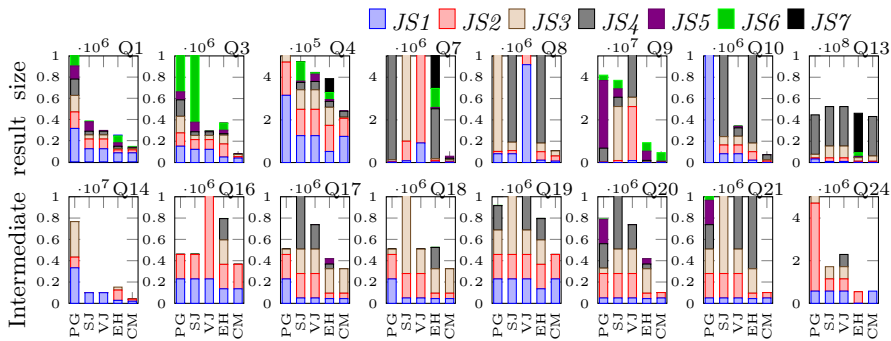}
    \caption{\textbf{Cost:} intermediate result size. CM: \textit{CMJoin}. JS: Joining step.}
    \label{fig:evaluation_intermediate}
\end{figure}

\noindent \textbf{Cost analysis} \indent
\autoref{tab:intermediate_result} presents the intermediate result sizes showing that \textit{CMJoin} outputs 5.46x, 5.90x, 1.92x, and 1.90x less intermediate results on average than \textit{PG}, \textit{SJ}, \textit{VJ}, and \textit{EH}, respectively. \autoref{fig:evaluation_intermediate} depicts more detailed intermediate results for each joining step. In general, \textit{CMJoin} generates less intermediate results due to its designed algorithmic process, worst-case optimality, as well as join order selections. In contrast, \textit{PG}, \textit{SJ}, and \textit{VJ} can easily yield too many (often quadratic) intermediate results during joining in different datasets or queries. This is because they have no technique to avoid undesired massive intermediate results.

\textit{PG} and \textit{SJ} suffer when the twig matching becomes complex in datasets (e.g. $Q3$ and $Q10$), while \textit{VJ} suffers in the opposite case of simpler twig pattern matching (e.g. $Q7$ and $Q16$). More specifically in $Q3$, \textit{PG} and \textit{SJ} output significant intermediate results by joining of two relational tables. In turn, \textit{VJ} controls intermediate results utilizing the values of common attributes and tags between two models. On the other hand, in $Q7$, $Q9$, and $Q16$ \textit{VJ} does not consider structural matching at first yielding unnecessary quadratic intermediate results. The above two-side examples indicate that solutions considering only one model at a time or joining values first without twig matching produce an undesired significant intermediate result. 

\textit{EH} suffers when the queries and attributes are more connected that leads to larger intermediate results during join procedures. The reason is that \textit{EH} seeks a better instance bound so that it follows the query plan based on the GHD decomposition \cite{DBLP:conf/sigmod/AbergerTOR16}. Our proposed method, \textit{CMJoin}, by wisely joining between models, avoids an unnecessary massive intermediate output from un-joined attributes. 

\noindent \textbf{Summary} \indent We summarize evaluations of \textit{CMJoin} as follows:
\begin{enumerate}[leftmargin=*]
\item Extensive experiments on diverse datasets and queries show that averagely \textit{CMJoin} achieves up to 13.43x faster runtime performance and produces up to 5.46x less intermediate results compared to other solutions.
\item With skew data \textit{CMJoin} avoids undesired huge intermediate results by wisely joining data between models. With uniform data \textit{CMJoin} filters out more values by joining one attribute at a time between all models.
\item With more tables, twigs, or common attributes involved \textit{CMJoin} seems to perform more efficiently and scale better.
\end{enumerate}

\section{Related work}\label{sec:related_work}



\noindent \textbf{Worst-case size bounds and optimal algorithms} \indent  
Recently, Grohe and Marx \cite{10.1145/2636918} and Atserias, Grohe, and Marx  \cite{DBLP:conf/focs/AtseriasGM08} estimated size bounds for conjunctive joins using the fractional edge cover. That allows us to compute the worst-case size bound by linear programming. Based on this worst-case bound, several worst-case optimal algorithms have been proposed (e.g. \textit{NPRR} \cite{DBLP:journals/corr/abs-1203-1952}, \textit{LeapFrog} \cite{veldhuizen2012leapfrog}, \textit{Joen} \cite{ciucanu2015worst}). Ngo et al. \cite{DBLP:journals/corr/abs-1203-1952} constructed the first algorithm whose running time is worst-case optimal for all natural join queries. Veldhuizen \cite{veldhuizen2012leapfrog} proposed an optimal algorithm called \textit{LeapFrog} which is efficient in practice to implement. Ciucanu et al. \cite{ciucanu2015worst}  proposed an optimal algorithm \textit{Joen} which joins each attribute at a time via an improved tree representation. Besides, there exist research works on applying functional dependencies (FDs) for size bound estimation. The initiated study with FDs is from Gottlob, Lee, Valiant, and Valiant (GLVV) \cite{DBLP:journals/jacm/GottlobLVV12}, which introduces an upper bound called GLVV-bound based on a solution of a linear program on polymatroids. The follow-up study by Gogacz et al. \cite{gogacz2015entropy} provided a precise characterization of the worst-case bound with information theoretic entropy. Khamis et al. \cite{DBLP:conf/pods/KhamisNS16} provided a worst-case optimal algorithm for any query where the GLVV-bound is tight. See an excellent survey on the development of worst-case bound theory \cite{DBLP:journals/sigmod/NgoRR13}.

\noindent \textbf{Multi-model data management} \indent  As more businesses realized that data, in all forms and sizes, are critical to make the best possible decisions, we see a continuing growth of demands to manage and process massive volumes of different types of data \cite{DBLP:conf/edbt/LuH17}. The data are represented in various models and formats: structured, semi-structured, and unstructured. A traditional database typically handles only one data model. It is promising to develop a multi-model database to manage and process multiple data models against a single unified backend while meeting the increasing requirements for scalability and performance \cite{DBLP:conf/edbt/LuH17,DBLP:journals/corr/LuLXZ16}. Yet, it is challenging to process and optimize cross-model queries. 

Previous work applied naive or no optimizations on (relational and tree) CMCQs. There exist two kinds of solutions. The first is to use one query to retrieve the result from the system without changing the nature of the model \cite{DBLP:conf/fnc/NassiriMH18,zhangunibench}. The second is to encode and retrieve the tree data into a relational engine \cite{DBLP:conf/sigmod/AbergerTOR16,DBLP:journals/jsw/BousalemC15,DBLP:journals/kbs/QtaishA16,DBLP:conf/ACISicis/ZhuYFS17}. Even though the second solution accelerates twig matching, they both may suffer from generating large, unnecessary intermediate results. These solutions or optimizations did not consider cross-model worst-case optimality. Some advances are already in development to process graph patterns \cite{DBLP:conf/sigmod/NguyenABKNRR14,DBLP:conf/semweb/HoganRRS19,DBLP:conf/sigmod/AbergerTOR16}. In contrast to previous work, this paper initiates the study on the worst-case bound  for cross-model conjunctive queries with both relation and tree structure data.

\noindent \textbf{Join order} \indent 
%
In this paper, we do not focus on more complex query plan optimization. A better query plan~\cite{gottlob2005hypertree,DBLP:conf/pods/FischlGP18} may lead to a better bound for some instances~\cite{DBLP:conf/sigmod/AbergerTOR16} by combining the worst-case optimal algorithm and non-cyclic join optimal algorithm (i.e. Yannakakis \cite{DBLP:conf/vldb/Yannakakis81}). We leave this as the future work to continue optimizing CMCQs.

\begin{sidewaystable}
\tiny
\centering
\caption{Dataset statistics and designed queries~($m$=$10^6$, $k$=$10^3$).}
\label{tab:dataset}

\setlength{\tabcolsep}{2pt}
\setlength\extrarowheight{4pt}
\begin{tabular}{llcl p{2.4in} cc}
\toprule
\textbf{Dataset}                   & \textbf{Statistics}                       & \textbf{Query}            & \textbf{Relational table}                                                                                                                                       & \textbf{XML or JSON path query}                                           & \textbf{LP}& \textbf{\#Result} \\ \hline
                          
\multirow{6}{*}{\begin{tabular}[c]{@{}l@{}} D1:TreeBank\cite{xue2005penn_treebank}\\ (Linguistic data)\end{tabular}} & \multirow{6}{*}{\begin{tabular}[c]{@{}l@{}}Zipfian\\  Tables:~1$m$ rows\\ XML:~2.4$m$ nodes\end{tabular} } & Q1 & 
R1(NP,VP)    & \multirow{3}{*}{S[NP]/VP//PP[IN]//NNP}                         & $N^3$    &7.6$k$   \\ \cline{3-4} \cline{6-7} 
                          &                                             & Q2  &   R1(NP,VP)~R2(NP,PP)                                                                                                                                            &                           & $N^3$   &4.6$k$   \\ \cline{3-4} \cline{6-7}  
                          &                                            & Q3  &   R1(NP,VP)~R3(NP,NNP)                                                                                                                                                &                         & $N^3$   &<0.1$k$   \\ \cline{3-7}
&  & Q4 & 
R1(NP,VP)    & \multirow{3}{*}{\begin{tabular}[c]{@{}l@{}}S[NP]/VP//PP[IN]//NNP \\ S/VP/PP/IN \end{tabular}}                         & $N^3$    &1.4$k$   \\ \cline{3-4} \cline{6-7} 
                          &                                             & Q5  &   R1(NP,VP)~R2(NP,PP)                                                                                                                                            &                           & $N^2$   &0.8$k$   \\ \cline{3-4} \cline{6-7}  
                          &                                            & Q6  &   R1(NP,VP)~R3(NP,NNP)                                                                                                                                                &                         & $N^3$   & <0.1$k$  \\ \hline
\multirow{6}{*}{\begin{tabular}[c]{@{}l@{}} D2:Xmark\cite{DBLP:conf/vldb/SchmidtWKCMB02}\\ (Auction data)\end{tabular}}            
& \multirow{6}{*}{\begin{tabular}[c]{@{}l@{}}Normal\\  Tables:~1$m$ rows\\ XML:~1.6$m$ nodes\end{tabular}}         & Q7   & \multirow{6}{*}{\begin{tabular}[c]{@{}l@{}}R1(incategory,quantity,email)\\ R2(item,incategory,email)\\ R3(item,quantity,email)\end{tabular}}    & T7=Item{[}incategory{]}/quantity                  & $N^2$  &91$k$   \\ \cline{3-3} \cline{5-7}
                          &                                     & Q8          &                                                                                                                                                 & T8=Item{[}incategory{]}{[}localtion{]}{[}quantity{]}//email   & $N^3$ &0.4$k$    \\ \cline{3-3} \cline{5-7} 
                          &                                     & Q9         &                                                                                                                                                  & T9=Item{[}location{]}//email                       & $N^3$    &2.4$m$  \\ \cline{3-3} \cline{5-7} 

&   & Q10   &    & T7, T8                  & $N^3$  &0.7$k$   \\ \cline{3-3} \cline{5-7}
                          &                                     & Q11          &                                                                                                                                                 & T7, T9   & $N^3$ &0.7$k$    \\ \cline{3-3} \cline{5-7} 
                          &                                     & Q12         &                                                                                                                                                  & T7, T8, T9                       & $N^3$    &0.7$k$  \\ \hline

\multirow{3}{*}{\begin{tabular}[c]{@{}l@{}} D3:UniBench\cite{zhangunibench}\\ (E-commerce)\end{tabular}}  & \multirow{3}{*}{\begin{tabular}[c]{@{}l@{}}Uniform\\  Tables:~1$m$ rows\\ JSON:~2$m$-4$m$ nodes\end{tabular}}   & Q13    & \multirow{3}{*}{\begin{tabular}[c]{@{}l@{}}R1(asin,productID,orderID)\\ R2(personID,lastname) \\ R3(productID,product\_info)\end{tabular}}    & \$.[orderID,personID]                     & $N^3$   &37.0$m$   \\ \cline{3-3} \cline{5-7} 
                          &                                           & Q14     &                                                                                                                                                & \$.[orderID,personID,orderline[productID]] & $N^3$  &<0.1$k$    \\ \cline{3-3} \cline{5-7} 
                          &                                            & Q15   &                                                                                                                                                 & \$.[personID,orderline[productID, asin]]                    & $N^3$  &0.1$k$    \\ \hline

\multirow{6}{*}{\begin{tabular}[c]{@{}l@{}} D3:UniBench\cite{zhangunibench}\\ (E-commerce)\end{tabular}} 
 & \multirow{6}{*}{\begin{tabular}[c]{@{}l@{}}Uniform\\  Tables:~1$m$ rows\\ JSON:~4$m$ nodes \\ XML:~1.4$m$ nodes\end{tabular}}   & Q16   & \multirow{6}{*}{\begin{tabular}[c]{@{}l@{}}R1(asin,orderID) R2(personID,lastname) \\ \$.[orderID,personID,orderline[asin]] \end{tabular}}     & OrderLine{[}asin{]}/price                      & $N^3$   &1.1$k$  \\\cline{3-3} \cline{5-7} 
                          &                                        & Q17        &                                                                                                                                                & T17=Invoice{[}orderID{]}/orderline{[}asin{]}/price & $N^3$   &<0.1$k$  \\\cline{3-3} \cline{5-7} 
                          &                                        & Q18        &                                                                                                                                                & T18=Invoice{[}orderID{]}//asin                     & $N^3$   &<0.1$k$   \\ \cline{3-3} \cline{5-7} 

 &   & Q19   &     & orderline/asin, orderline/price                      & $N^3$   &1.1$k$  \\\cline{3-3} \cline{5-7} 
                          &                                        & Q20        &                                                                                                                                                & 
                          Invoice(I)/orderID, I/orderline(O)/asin, I/O/price & $N^3$   &<0.1$k$  \\\cline{3-3} \cline{5-7} 
                          &                                        & Q21        &                                                                                                                                                & T17, T18                     & $N^3$   &<0.1$k$   \\ \hline

\multirow{3}{*}{\begin{tabular}[c]{@{}l@{}} D4:MIMIC-III\cite{dataset-johnson2016mimic}\\ (Clinical data)\end{tabular}} & \multirow{3}{*}{\begin{tabular}[c]{@{}l@{}}Uniform\\  Tables:0.5-10$m$ rows\\ JSON:~10$m$ nodes\end{tabular}}   & Q22   &  \begin{tabular}[c]{@{}l@{}} R1(RowID,ICUstayID,ItemID,CGID),\\ R2(RowID,SubjectID,ICUstayID,ItemID)\end{tabular}
    & T22=\$.[RowID,SubjectID,HADMID]                    & $N$   &<0.1$k$  \\\cline{3-4} \cline{5-7} 
                          &                                        & Q23        &    R1,R3(SubjectID,ItemID)                                                                                                                                            & 
                          T22 & $N^\frac{3}{2}$   &<0.1$k$  \\\cline{3-4} \cline{5-7} 
                          &                                        & Q24        &   R1,R2,R4(RowID,SubjectID,HADMID)                                                                                                                                             & \begin{tabular}[c]{@{}l@{}} T22, T23=\$.[RowID,ICUstayID,ItemID,CGID],\\ T24=\$.[RowID,SubjectID,HADMID,ICUstayID,ItemID,CGID]\end{tabular}                   & $N$   &<0.1$k$   \\ \bottomrule

\end{tabular}

\end{sidewaystable}

\section{Conclusion and future work}\label{sec:conclusion}
In this paper, we studied the problems to find the worst-case size bound and optimal algorithm for cross-model conjunctive queries with relation and tree structured data. 
We provide the optimized algorithm, i.e. CMJoin, to compute the worst-case bound and the worst-case optimal algorithm for cross-model joins.  Our experimental results demonstrate the superiority of proposal algorithms against state-of-the-art systems and algorithms in terms of efficiency, scalability, and intermediate cost. Exciting follow-ups will focus on adding graph structured data into our problem setting and designing a more general cross-model  algorithm involving three data models, i.e. relation, tree and graph.

\section{Acknowledgment}\label{sec:acknowledgment}
This paper is partially supported by Finnish Academy Project 310321 and Oracle ERO gift funding.


\section{Appendix}\label{sec:appendix} \label{sec:supplement_proof}

\begin{proof}[of \autoref{theo:np-hard}]

The proof idea is to polynomially reduce the \textit{1-IN-3SAT} problem~\cite{schaefer1978complexity} to our problem.
The goal is to show that once we find the size bound of the query, we achieve the solution for \textit{1-IN-3SAT}. We briefly exploit the \textit{1-in-3SAT} problem in this part:

\begin{itemize}[leftmargin=*, label=-]
\item Instance: A collection of clauses $l_1$, ..., $l_m$, $m >$ 1; each $l_i$
is a disjunction of exactly three literals.
\item Question: Is there a truth assignment to the variables occurring so that exactly one literal is true in each $l_i$?
\end{itemize}

\noindent \textbf{Query construction functions}:
\indent First, we introduce the basic functions for the proof as follows:
\begin{itemize}
    \item $\mathcal{K}_1((A,B),(C,D))$ constructs a tree pattern query $T=A[C]/D[B]$ and two relations $R_1(B,C)$ and $R_2(B,D)$, i.e., $\mathcal{K}_1=T\bowtie R_1\bowtie R_2$;
    \item $\mathcal{K}_2((A,B),(C,D,E,F)$ constructs a tree pattern query $T=D[A]/F/B//C/E$ and three relations $R_1(A,F)$, $R_2(A,C,E)$, and $R_3(B,C,E)$, i.e., $\mathcal{K}_2=T\bowtie R_1\bowtie R_2 \bowtie R_3$;
\end{itemize}

\noindent \textbf{Polynomial Construction} \indent Given any \textit{1-IN-3SAT} input $\phi$ with $m$ clauses $\mathcal{L}=\{l_1,...l_i...l_m \}$, where each $l_i$ is with $3$ variables, we create tree pattern and relation query set $\mathcal{Q}_{\phi}$, which is constructed as follows:

\noindent (1) \textbf{Variable construction:} In each clause $l_i$, for each variable in positive literal $x_j$, we establish two variables $x_{ij}$ and $a_{ij}$, while for its negative literal $\neg x_j$, we establish $y_{ij}$ and $b_{ij}$.

\noindent (2) \textbf{Within clause restriction:} For variable $x_{ik}$, $x_{ip}$, and $x_{id}$ in clause $l_i$, the construction form for each $l_i$ is following:

\begin{equation}\label{equ:np_encoding_within_clause}
    \begin{aligned}
    \mathcal{Q}_1 = \mathcal{K}_2((x_{ik},a_{ik}),(x_{ip},a_{ip},x_{id},a_{id}))\bowtie \\ \mathcal{K}_2((x_{ip},a_{ip}),(x_{ik},a_{ik},x_{id},a_{id}))\bowtie \\
    \mathcal{K}_2((x_{id},a_{id}),(x_{ik},a_{ik},x_{ip},a_{ip}))
    \end{aligned}
\end{equation}

\noindent (3) \textbf{Variable restriction:} For each variable's positive literal $x_{k}$ in $l_i$ and its negative literal $\neg x_{k}$ in $l_j$, the construction form is as follows:
\begin{equation}\label{equ:np_encoding_variable}
    \mathcal{Q}_2 = \mathcal{K}_1((x_{ik},a_{ik}),(y_{jk},b_{jk}))
\end{equation}

\noindent (4) \textbf{Between clauses restriction:} For each two clauses $l_i=\{x_{ik}, x_{ip}, x_{id}\}$ and $l_j=\{\neg x_{jk}, x_{jt}, x_{jc}\}$ with same variable $x_j$ but in form of both positive and negative literals, the restrictions are constructed as follows:
\begin{equation}\label{equ:np_encoding_between_clauses}
    \begin{aligned}
    \mathcal{Q}_3 = \mathcal{K}_2((x_{ip},a_{ip}), (x_{jt},a_{jt},x_{jc},a_{jc}))\bowtie \\ \mathcal{K}_2((x_{id},a_{id}), (x_{jt},a_{jt},x_{jc},a_{jc})) \bowtie \\ \mathcal{K}_2((x_{jt},a_{jt}), (x_{ip},a_{ip},x_{id},a_{id})) \bowtie \\ \mathcal{K}_2((x_{jc},a_{jc}), (x_{ip},a_{ip},x_{id},a_{id}))
    \end{aligned}
\end{equation}

\begin{table*}[t!]
\tiny
\centering
\caption{Example of reduction from a 1-IN-3SAT instance to a CMCQ.}
\label{tab:np_full_example}
\setlength{\tabcolsep}{2.4pt}
\begin{tabular}{cccc>{\centering\arraybackslash}p{1.2in}}
\toprule
$\phi$ & $(x_1 \vee \neg x_2 \vee x_3)$  & $(\neg x_1 \vee x_2 \vee x_3)$  & $(\neg x_1 \vee x_4 \vee x_5)$ & Description\\\hline
$\mathcal{Q}_1$ & $\begin{aligned}
    \mathcal{K}_2((x_{11},a_{11}),(y_{12},b_{12},x_{13},a_{13})) \\ \mathcal{K}_2((y_{12},b_{12}),(x_{11},a_{11},x_{13},a_{13})) \\
    \mathcal{K}_2((x_{13},a_{13}),(x_{11},a_{11},y_{12},b_{12}))
    \end{aligned}  $
    & $\begin{aligned}
    \mathcal{K}_2((y_{21},b_{21}),(x_{22},a_{22},x_{23},a_{23})) \\ \mathcal{K}_2((x_{22},a_{22}),(y_{21},b_{21},x_{23},a_{23})) \\
    \mathcal{K}_2((x_{23},a_{23}),(y_{21},b_{21},x_{22},a_{22}))
    \end{aligned}  $
    & $\begin{aligned}
    \mathcal{K}_2((y_{31},b_{31}),(x_{34},a_{34},x_{35},a_{35})) \\ \mathcal{K}_2((x_{34},a_{34}),(y_{31},b_{31},x_{35},a_{35})) \\
    \mathcal{K}_2((x_{35},a_{35}),(y_{31},b_{31},x_{34},a_{34}))
    \end{aligned} $
    & One of $x_j$, $x_d$, $x_p$ in each clause $l_i$ is $True$ \\\hline
    
$\mathcal{Q}_2$ & $\begin{aligned}
    \mathcal{K}_1((x_{11},a_{11}),(y_{21},b_{21}) \\
    \mathcal{K}_1((x_{11},a_{11}),(y_{31},b_{31}) 
    \end{aligned}  $
    & $\begin{aligned}
    \mathcal{K}_1((x_{22},a_{22}),(y_{12},b_{12}) 
    \end{aligned}  $
    & 
    & $x_i$ and $\neg x_i$ cannot be both $True$ \\\hline

$\mathcal{Q}_3$ & $\begin{aligned}
    \mathcal{K}_2((y_{12},b_{12}),(x_{22},a_{22},x_{23},a_{23})) \\ \mathcal{K}_2((x_{13},a_{13}),(x_{22},a_{22},x_{23},a_{23})) \\
    \mathcal{K}_2((y_{12},b_{12}),(x_{34},a_{34},x_{35},a_{35})) \\ \mathcal{K}_2((x_{13},a_{13}),(x_{34},a_{34},x_{35},a_{35})) \\
    \end{aligned}  $
    & $\begin{aligned}
    \mathcal{K}_2((x_{22},a_{22}),(y_{12},b_{12},x_{13},a_{13})) \\ \mathcal{K}_2((x_{23},a_{23}),(y_{12},b_{12},x_{13},a_{13})) \\
    \mathcal{K}_2((y_{21},b_{21}),(x_{11},a_{11},x_{13},a_{13})) \\
    \mathcal{K}_2((x_{23},a_{23}),(x_{11},a_{11},x_{13},a_{13}))
    \end{aligned}  $
    & $\begin{aligned}
    \mathcal{K}_2((x_{34},a_{34}),(y_{12},b_{12},x_{13},a_{13})) \\ \mathcal{K}_2((x_{35},a_{35}),(y_{12},b_{12},x_{13},a_{13})) 
    \end{aligned}  $
    & $x_i$ and $\neg x_i$ cannot be both $False$ \\

\bottomrule
\end{tabular}
\end{table*}



\noindent Therefore, $\mathcal{Q}_{\phi}=\mathcal{Q}_1 \bowtie, \mathcal{Q}_2 \bowtie \mathcal{Q}_3$.
For $\mathcal{Q}_{\phi}$ construction, we establish $2*m$ variables, which is $\mathcal{O}(m)$. For each clause in $\mathcal{Q}_1$, we call three times of $\mathcal{K}_2()$ function, with each of $14$ elements of variables. So the construction cost is $3*m*14$, which is $\mathcal{O}(m)$. For variable in $\mathcal{Q}_2$, we call one time of $\mathcal{K}_1()$ for each variable with both positive and negative literals. In worst case, each positive literal has a negative literals, then the construction cost is at most $8*(3*m/2)^2$, which is $\mathcal{O}(m^2)$. For each variable in $\mathcal{Q}_3$, the construction cost is linear (4 times, i.e., two variables to two variables) to  $\mathcal{Q}_2$, i.e., $4*8*(3*m/2)^2$, which is $\mathcal{O}(m^2)$. So the total construction cost is $\mathcal{O}(m^2)$, which is a polynomial transformation.
Note that we can replace relation tables by PC paths so that the CMCQ has only tree pattern queries.

Next, we introduce lemmas for the $\mathcal{K}_1()$ and $\mathcal{K}_2()$ functions as the proof foundation. 
\autoref{fig:np_building_block1} shows an example to illustrate the $\mathcal{K}_1()$ and \autoref{fig:np_building_block2} shows an example to illustrate the $\mathcal{K}_2()$ function and how it relates to $\phi$ instance.

\begin{figure}[h]
        \centering
        \includegraphics[width=0.85\linewidth]{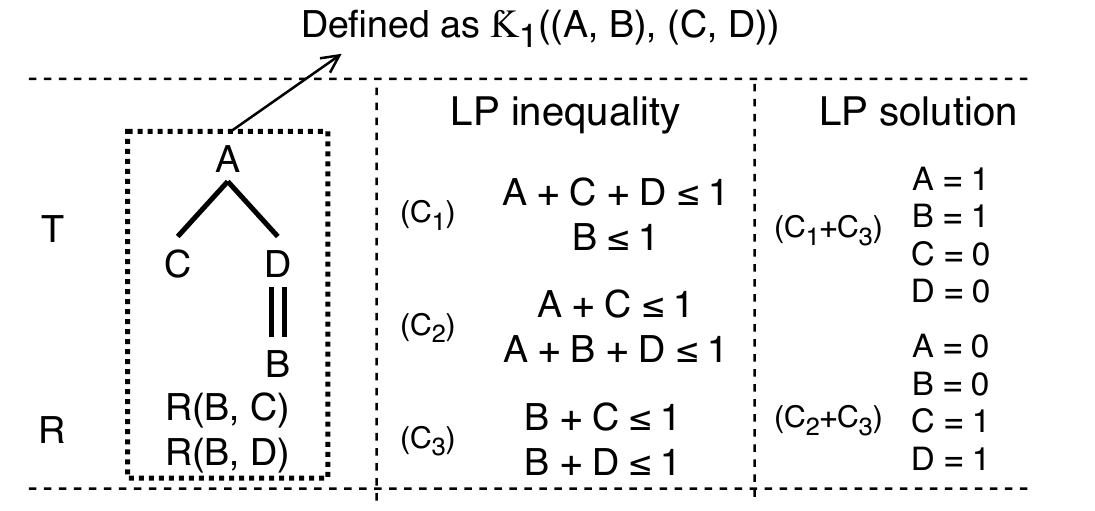}
        \caption{Illustration example for \autoref{lemma:np_building_block}}
        \label{fig:np_building_block1}
\end{figure}%

\begin{lemma}\label{lemma:np_building_block}
Given a tree pattern query $T=A[C]/D[B]$ and two relations $R_1(B,C)$ and $R_2(B,D)$, size bound $\Theta(N^2)$ is obtained for $|Q=T\bowtie R_1\bowtie R_2|$ by constructing either $|A|=|B|=\Theta(N)$ or $|A|=|B|=\mathcal{O}(1)$.
\end{lemma}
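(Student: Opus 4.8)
The plan is to establish the two sides of $\Theta(N^2)$ separately: a universal upper bound $|Q|=\mathcal{O}(N^2)$ holding for every instance that obeys the two size assumptions, and two explicit instance families that each realize $\Omega(N^2)$. For the upper bound I would first observe that the pattern $T=A[C]/D[B]$ uses only child axes, so it is already a canonical suite and needs no conversion or split: its PC paths are $A/C$ and $A/D/B$, contributing (via \autoref{def:PCPath}) the inequalities $x_A+x_C\le 1$ and $x_A+x_D+x_B\le 1$, while $R_1(B,C)$ and $R_2(B,D)$ contribute $x_B+x_C\le 1$ and $x_B+x_D\le 1$. Adding the $A/C$ inequality to the $R_2$ inequality yields $x_A+x_B+x_C+x_D\le 2$, and the assignment $x_A=x_B=0,\ x_C=x_D=1$ attains it, so the LP \eqref{eq:lp_slution1} has optimum $2$; by \autoref{them:correctOfAlgorithm} the worst-case size bound is $N^2$. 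The same ceiling can be read combinatorially: the map $(a,b,c,d)\mapsto((a,c),(b,d))$ is injective on $Q$, the set of $(b,d)$-pairs occurring in $Q$ is contained in $R_2$ and hence has $\le N$ elements, and the PC path $A/C$ admits $\le N$ distinct matches, so $|Q|\le N\cdot N$.

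For the $\Omega(N^2)$ direction I would exhibit the two promised modes. In the \emph{$|A|=|B|=\Theta(N)$} mode, take $N$ distinct $A$-values $a_1,\dots,a_N$, $N$ distinct $B$-values $b_1,\dots,b_N$, a single $C$-value $c$ and a single $D$-value $d$; build $N$ subtrees, where subtree $i$ has root $a_i$ with one $c$-child and one $d$-child, the latter carrying all of $b_1,\dots,b_N$ as children. Setting $R_1=\{(b_j,c)\}_j$ and $R_2=\{(b_j,d)\}_j$ (each of size $N$) makes every $(a_i,b_j,c,d)$ a match, giving $|Q|=N^2$. In the \emph{$|A|=|B|=\mathcal{O}(1)$} mode, take a single $A$- and $B$-value but $N$ distinct $C$- and $D$-values: one root $a$ with children $c_1,\dots,c_N$ and with $d$-children $d_1,\dots,d_N$, each $d_k$ having one $b$-labeled child; with $R_1=\{(b,c_i)\}_i$ and $R_2=\{(b,d_k)\}_k$ every $(a,b,c_i,d_k)$ matches, and again $|Q|=N^2$.

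The step demanding the most care is the simultaneous bookkeeping of both size assumptions in each construction: I must verify that no label value is carried by more than $N$ nodes and that neither relation exceeds $N$ rows, yet the output still reaches $N^2$. In both modes exactly one label type ends up replicated $N$ times per value (the $b$-children of the $d$-node in the first mode, the $b$-leaves in the second), so the per-label node count is precisely $N$ and the tables have exactly $N$ rows — the tight configuration. Finally, establishing that \emph{both} modes attain the identical optimum $N^2$ is precisely what makes this a usable gadget downstream: the two equally optimal realizations will encode the two truth values of a literal in the $\mathcal{K}_1$-based reduction, so the proof should stress that neither mode is strictly better than the other.
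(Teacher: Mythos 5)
Your proof breaks at the point where you fix the axis semantics: you read every edge of $T=A[C]/D[B]$ as a child edge, and this makes your two halves mutually inconsistent. Under your reading the LP consists of the PC-path inequalities $x_A+x_C\le 1$ and $x_A+x_D+x_B\le 1$ plus the relational ones, and that LP has a \emph{unique} optimum at $x_A=x_B=0,\ x_C=x_D=1$; the point $x_A=x_B=1$ is infeasible. Your first construction confirms rather than circumvents this: placing $N$ $b$-children under each of the $N$ $d$-nodes creates $N^2$ nodes matching the query node $B$ and realizes $N^2$ distinct $(a,d,b)$ matches of the path $A/D/B$, directly contradicting the inequality $x_A+x_D+x_B\le 1$ (whose justification in \autoref{def:PCPath} is precisely that a PC-path has at most $N$ matches) and also the step of your combinatorial argument that needs path matches bounded by $N$. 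So either the instance is illegal (each query attribute must be backed by a node table of at most $N$ rows, which is what makes the PC-path inequalities and your injective-map bound sound), or, if one reads the size assumption as a per-label-value bound only, the instance is legal but then both of your upper-bound arguments rest on a false constraint. Either way the $\Theta(N)$-mode half of the lemma is not established.

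The paper's proof takes the route your reading forecloses: the edge between $D$ and $B$ is a \emph{descendant} edge, which is exactly why its proof produces two alternative inequality sets --- $C_1=\{x_A+x_C+x_D\le 1,\ x_B\le 1\}$ from splitting $D/\!/B$ (with its compensation inequality) and a conversion suite $\{x_A+x_C\le 1,\ x_A+x_D+x_B\le 1\}$ (the paper's printed $C_2$ contains a typo, but only this version makes its claimed optimum feasible). Each suite, joined with $R_1,R_2$, has optimum $2$, but at different vertices: $A=B=1$ for the split suite, $A=B=0,\ C=D=1$ for the conversion suite. Correspondingly, the $|A|=|B|=\Theta(N)$ instance must be built by \emph{vertical} expansion, not your horizontal one: a chain of $N$ units, unit $i$ being an $a_i$-node with one $c$-child and one $d$-child whose subtree continues the chain, with the $N$ $b$-nodes at the bottom; then every $d$-node is an ancestor of every $b$-node, the pattern has $N^2$ matches, yet each query attribute touches only $N$ nodes and $R_1,R_2$ keep $N$ rows (all $c$-nodes share one value, all $d$-nodes share one value). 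This bimodality --- both $(1,1,0,0)$ and $(0,0,1,1)$ optimal and nothing else --- is what the gadget must deliver in the 1-IN-3SAT reduction, and it genuinely requires the descendant axis: with only child edges and the node budgets respected, the $A=B=\Theta(N)$, $C=D=\mathcal{O}(1)$ mode can produce at most $\mathcal{O}(N)$ answers. Your second construction (the $\mathcal{O}(1)$ mode) is correct and coincides with the paper's conversion mode.
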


\begin{proof}
First, $T$ can be equivalently transformed into two alternative constraint sets $C_1=\{(A, C,D), (B)\}$ and $C_2=\{(A, C), (B,C,D)\}$, i.e., $C_1:\{ A+C+D\leq 1;~B\leq 1\}$ and $C_2:\{A+C\leq 1;~B+C+D\leq 1\}$ in LP form. And $R_1$ and $R_2$ is equivalent to $C_3=\{(B,C), (B,D)\}$, i.e., $C_3:\{B+C\leq 1;~B+D\leq 1\}$. The final size bound $\Theta(N^2)$ is obtained by either $C_1\cup C_3$ or $C_2\cup C_3$, with LP solution $A=B=1$ or $A=B=0$ ($C=D=1$), respectively. In $C_1\cup C_3$, by combining $A+C+D\leq1$ with $B+C\leq 1$ (or $B+D\leq 1$), we have $A+B+2C+D\leq2$ (or $A+B+C+2D\leq2$). If $C>0$ (or $D>0$), then $A+B+C+D<2$, meaning $Q$ can not achieve $\Theta(N^2)$. Thus $C=D=0$, meaning $A=B=1$.
Likewise in $C_2\cup C_3$, we obtain $A=B=0;~C=D=1$.
\end{proof}

\begin{figure}[h]
        \centering
        \includegraphics[width=0.85\linewidth]{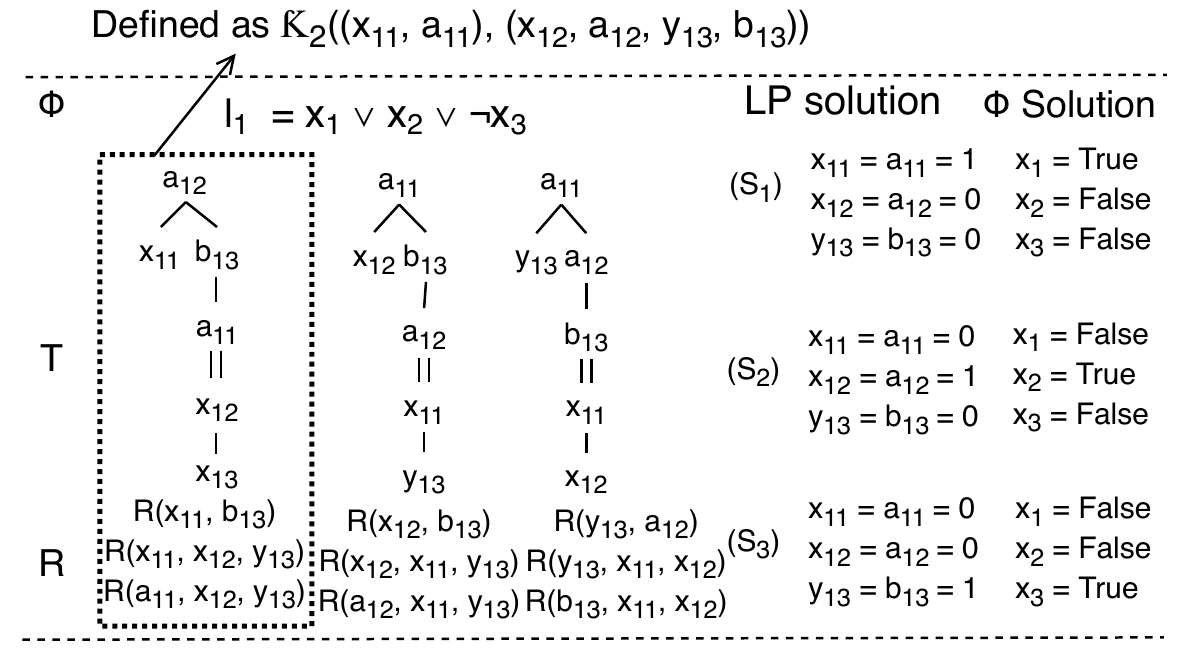}
        \caption{Illustration example for \autoref{lemma:np_building_block2} and the query construction for a single clause.}
        \label{fig:np_building_block2}
\end{figure}%

\begin{lemma}\label{lemma:np_building_block2}
Given a tree pattern query $T=D[A]/F/B//C/E$ and three relations $R_1(A,F)$, $R_2(A,C,E)$, and $R_3(B,C,E)$, size bound $\Theta(N^2)$ is obtained for $|Q=T\bowtie R_1\bowtie R_2 \bowtie R_3|$ by constructing either $|A|=|B|=\Theta(N)$ or $|A|=|B|=\mathcal{O}(1)$.
\end{lemma}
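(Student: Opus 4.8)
The plan is to mirror the argument of \autoref{lemma:np_building_block}: reduce the tree pattern to the canonical suites produced by the conversion/split machinery of \autoref{sec:size_bound}, adjoin the relational inequalities, and then read off the optimal LP solutions. The pattern $T=D[A]/F/B//C/E$ has a single descendant axis, $B//C$, so \autoref{alg:simple_minors} generates exactly two canonical suites, which I would compute explicitly. Conversion $T\mid(B//C)$ turns the pattern into the all-child tree $D[A]/F/B/C/E$, whose two root-to-leaf PC paths give $x_D+x_A\le 1$ and $x_D+x_F+x_B+x_C+x_E\le 1$. Split $T\parallel(B//C)$ detaches $C/E$ from the top piece $D[A]/F/B$ and, by the compensation rule (the root-to-$B$ prefix $D/F/B$ together with the sibling leaf path $D/A$), emits $x_A+x_B+x_D+x_F\le 1$, alongside $x_D+x_A\le1$ and $x_C+x_E\le1$. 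To both suites I would append the relational inequalities $x_A+x_F\le1$, $x_A+x_C+x_E\le1$ and $x_B+x_C+x_E\le1$ coming from $R_1,R_2,R_3$.

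The heart of the argument is then two short LP calculations, each of which I expect to settle the claimed $\Theta(N^2)$ bound together with the forced alignment of $A$ and $B$. Writing $\sigma=\sum_r x_r$, for the conversion suite I would sum the two PC inequalities to get $\sigma+x_D\le 2$, hence $\sigma\le 2$; equality forces $x_D=0$ with both inequalities tight, after which $x_A=1$ from the first PC path, the relations $R_1,R_2$ force $x_F=x_C=x_E=0$, and tightness of the long path yields $x_B=1$. So every optimum of this suite sits at $x_A=x_B=1$, i.e.\ $|A|=|B|=\Theta(N)$. For the split suite I would add the compensation inequality $x_A+x_B+x_D+x_F\le1$ to $x_C+x_E\le1$ to again obtain $\sigma\le2$; equality forces $x_C+x_E=1$, whereupon $R_2$ and $R_3$ drive $x_A=x_B=0$, leaving $x_C=x_D=1$ as a witness. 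So every optimum of this suite sits at $x_A=x_B=0$, i.e.\ $|A|=|B|=\mathcal{O}(1)$.

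Combining the two, the maximum over all canonical suites is $2$, so by \autoref{them:correctOfAlgorithm} the worst-case size bound of $Q$ is $N^2$; the two families of optima are realized by the corresponding worst-case instances (horizontal duplication under the converted branch, and vertical stacking across the split, as in the discussion of \autoref{lem:split}), giving the matching $\Omega(N^2)$ lower bound. In both regimes $A$ and $B$ move together, which is exactly the dichotomy $|A|=|B|=\Theta(N)$ or $|A|=|B|=\mathcal{O}(1)$. The main obstacle I anticipate is bookkeeping rather than depth: getting the compensation inequality exactly right (it must bundle the root-to-$B$ prefix with the $D/A$ sibling path) and making explicit that it is the \emph{joint} action of the cross-model relations $R_2$ and $R_3$, not the tree pattern alone, that rules out the misaligned optima with one of $A,B$ large and the other small; these relations are precisely what make the gadget encode a single Boolean value.
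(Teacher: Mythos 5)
Your proposal is correct and follows essentially the same route as the paper's proof: both reduce $T$ to the same two inequality systems (the conversion suite $\{x_D+x_A\le 1,\ x_B+x_C+x_D+x_E+x_F\le 1\}$ and the split suite with compensation $\{x_A+x_B+x_D+x_F\le 1,\ x_C+x_E\le 1\}$), adjoin the relational constraints from $R_1,R_2,R_3$, and show each LP has optimum $2$ attained only at $x_A=x_B=1$ and $x_A=x_B=0$, respectively. Your write-up is slightly more explicit about tightness, witnesses, and the $\Omega(N^2)$ instances, but the decomposition and the LP reasoning coincide with the paper's.
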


\begin{proof}
First, $T$ can be transformed into two alternative inequality sets i.e., $C_1:\{ A+B+D+F\leq 1;~C+E\leq 1\}$ and $C_2:\{A+D\leq 1;~B+C+D+E+F\leq 1\}$. And $R_1$, $R_2$, and $R_2$ is equivalent to $C_3:\{A+F\leq 1;~A+C+E\leq 1;~B+C+E\leq 1\}$. The final size bound $\Theta(N^2)$ is obtained by either $C_1\cup C_3$ or $C_2\cup C_3$, with LP solution $A=B=1$ or $A=B=0$, respectively. In $C_1\cup C_3$, by combining $ A+B+D+F\leq 1$ and $A+C+E\leq 1$, we have $2A+B+C+D+E+F\leq 2$, so $A$ should be $0$ to achieve final result $2$. By combining  $ A+B+D+F\leq 1$ and $B+C+E\leq 1$, we achieve $B=0$. Thus $A=B=0$ achieves $\mathcal{O}(N^2)$ in this case.  While in $C_2\cup C_3$, combining $B+C+D+E+F\leq 1$ with $A+D\leq 1$, $A+F\leq 1$, and $A+C+E\leq 1$, we obtain $D=0$, $F=0$ and $C=E=0$, respectively. Thus $A=B=1$ achieves $\Theta(N^2)$.
\end{proof}


\noindent \indent We now prove that $\mathcal{Q}_{\phi}$ achieves size bound $\Theta(N^{2m})$ if and only if
$\phi$ is satisfiable, where $m$ is number of clause.

\noindent (1) If we achieve $\Theta(N^{2m})$ size bound for $\mathcal{Q}_{\phi}$, then we can achieve $\Theta(N^{2})$ for each clause.
By within clause restriction $\mathcal{Q}_1$, for variable $x_{ij}$, $x_{ip}$, and $x_{id}$ in clause $l_i$, only one of $x_{ij}=a_{ij}$, $x_{ip}=a_{ip}$, and $x_{id}=a_{id}$ is assigned to be $1$ to achieve size bound $\Theta(N^{2})$, corresponding to only one of $x_j$, $x_p$, or $x_d$ in each clause $l_i$ to be assigned to be $True$.
By $\mathcal{Q}_2$ for each variable $x_k$, by \autoref{lemma:np_building_block}, either $x_{ik}=a_{ik}=1$ or $y_{jk}=b_{jk}=1$, corresponding to $x_i=True$ or $\neg x_i=True$ in $\phi$, respectively, meaning that $x_k$ and $\neg x_k$ can not be both $True$.
By $\mathcal{Q}_3$ for each two clauses $l_i=\{x_{ik}, x_{ip}, x_{id}\}$ and $l_j=\{\neg x_{jk}, x_{jt}, x_{jc}\}$ with common variable $x_k$ in different (one positive and one negative) literal forms, $\mathcal{Q}_3$ guarantees, if $x_{ik}=a_{ik}=0$, then $x_{ip}=a_{ip}=1$ or $x_{id}=a_{id}=1$, which forces $x_{jt}=n_{jt}=0$ and $x_{jc}=n_{jc}=0$, thus making $y_{jk}=b_{jk}=1$. Likewise, when $y_{jk}=b_{jk}=0$, we have $x_{ik}=a_{ik}=1$. $\mathcal{Q}_3$ corresponds to property that  $x_j$ and $\neg x_j$ can not be both $False$. By these assignments, each clause is with one $True$, and each variable pair $x_k$ and $\neg x_k$ can not be both $True$ and both $False$, then $\phi$ is satisfiable. 

\noindent (2) Conversely, if $\phi$ is satisfiable, then we have only one $True$ variable in one of  $x_j$, $x_p$, or $x_d$ in each clause $l_i$. We can assign LP values to corresponding variables $x_{ij}=a_{ij}$, $x_{ip}=a_{ip}$, or $x_{id}=a_{id}$ in $\mathcal{Q}_\phi$ such that each clause achieve $\Theta(N^{2})$. Since each variable pair $x_k$ and $\neg x_k$ can not be both $True$ and both $False$. So assigning LP values satisfies our $\mathcal{Q}_2$ and $\mathcal{Q}_3$ constructions. By summing up $m$ clause results, we achieve $|\mathcal{Q}_\phi|=\Theta(N^{2m})$. By our construction, as each clause has $6$ variables and is with $\mathcal{O}(N^{2})$ by \autoref{lemma:np_building_block2}, meaning that $\mathcal{Q}_\phi$ is upper bounded by $\mathcal{O}(N^{2m})$. Therefore, we achieve the size bound of $\mathcal{Q}_\phi$.

Therefore, $\mathcal{Q}_{\phi}$ achieves the size bound $\Theta(N^{2m})$ if and only if $\phi$ is satisfiable.
\end{proof}

\begin{example}
\autoref{tab:np_full_example} shows a full example to illustrate the reduction from a \textit{1-in-3SAT} instance. Given $\phi$, we construct the tree pattern query and relation by our polynomial transformation.

We can achieve the size bound $\Theta(N^{(2*m)})$ = $\Theta(N^6)$ by assigning $x_{11}=a_{11}=x_{22}=a_{22}=x_{34}=a_{34}=1$, where $m=3$ is the clause numbers. So the LP solution corresponds to the satisfiable solution for $\phi$, i.e., $x_1=True$, $x_2=False$, and $x_3=False$.

Conversely, satisfiable solution $x_1=True$, $x_2=False$, and $x_3=False$ in $\phi$ achieves size bound $\Theta(N^6)$ by assigning $x_{11}=a_{11}=x_{22}=a_{22}=x_{34}=a_{34}=1$.
\end{example}




%
%
%
\bibliographystyle{plain}
\bibliography{reference}

\end{document}